\documentclass[12pt, a4paper]{article}

\usepackage{amsmath}
\usepackage{amsfonts}
\usepackage{amssymb}
\usepackage{amsthm}
\usepackage[top=2cm, bottom=2cm, left=2.5cm, right=2.5cm]{geometry}
\usepackage{ifthen}
\usepackage{needspace}
\usepackage{parskip}
\usepackage{stmaryrd}
\usepackage[]{todonotes}
\usepackage{xspace}

\allowdisplaybreaks
\usepackage[utf8]{inputenc}

\renewenvironment{proof}[1][\proofname]{{\bfseries #1: }}{\qed}

\newtheoremstyle{defstyle}{10pt}{5pt}{\addtolength{\leftskip}{1\leftmargini}\addtolength{\rightskip}{1\leftmargini}}{-1\leftmargini}{\small\scshape\bfseries}{:}{\newline}{#1 #2\ifthenelse {\equal {#3}{}} {}{ (\textnormal{\textsc{#3}})}}

\newtheoremstyle{thmstyle}{10pt}{5pt}{\addtolength{\leftskip}{1\leftmargini}\addtolength{\rightskip}{1\leftmargini} \slshape}{-1\leftmargini}{\small\scshape\bfseries}{:}{\newline}{#1 #2\ifthenelse {\equal {#3}{}} {}{ (\textnormal{\textsc{#3}})}}

\theoremstyle{defstyle}

\newtheorem{mydef}{Definition}

\theoremstyle{thmstyle}

\newtheorem{mythm}{Theorem}
\newtheorem{mylem}{Lemma}
\newtheorem{myrem}{Remark}
\newtheorem{mycor}{Corollary}

\DeclareFontFamily{OT1}{pzc}{}
\DeclareFontShape{OT1}{pzc}{m}{it}%
              {<-> s * [1.15] pzcmi7t}{}
\DeclareMathAlphabet{\mathpzc}{OT1}{pzc}%
                                 {m}{it}

\hyphenation{pri-mi-tive}

\allowdisplaybreaks

\usepackage{alltt}
\usepackage{amsmath}
\usepackage{amssymb}
\usepackage{stmaryrd}
\setcounter{tocdepth}{3}
\usepackage{graphicx}
\usepackage{tikz}
\usepackage{wrapfig}
\usepackage{xspace}

\newcommand{\soprogs}{\ensuremath{\textnormal{\textsf{Prog}}}\xspace}
\newcommand{\sonprogs}{\ensuremath{\textnormal{\textsf{ordProg}}}\xspace}
\newcommand{\sovars}{\ensuremath{\textnormal{\textsf{Var}}}\xspace}
\newcommand{\ttt}[1]{\textnormal{\texttt{#1}}}

\newcommand{\T}{\ensuremath{\textnormal T}\xspace}
\newcommand{\Tp}{\ensuremath{\T_{\mathit{prob}}}\xspace}
\newcommand{\Tps}{\ensuremath{\Tp^*}\xspace}
\newcommand{\Exp}[2]{\ensuremath{\textnormal E_{#1}(#2)}\xspace}
\renewcommand{\Pr}{\textnormal{Pr}}

\newcommand{\Problem}[1]{\mathcal{#1}}
\newcommand{\cProblem}[1]{\overline{\text{\footnotesize{$\Problem{#1}$}}}}

\newcommand{\bbbn}{\mathbb{N}\xspace}
\newcommand{\bbbq}{\mathbb{Q}\xspace}
\newcommand{\bbbs}{\mathbb{S}\xspace}

\newcommand{\UHP}{\Problem{UH}\xspace}
\newcommand{\cUHP}{\cProblem{UH}\xspace}
\newcommand{\AST}{\ensuremath{\Problem{AST}}\xspace}

\newcommand{\EXP}{\ensuremath{\Problem{EXP}}}
\newcommand{\LEXP}{\ensuremath{\Problem{LEXP}}}
\newcommand{\REXP}{\ensuremath{\Problem{UEXP}}}

\newcommand{\leqm}{\mathrel{\:{\leq}_{\textnormal{m}}}}

\begin{document}

\title{Analyzing Expected Outcomes and Almost--Sure Termination of Probabilistic Programs is Hard\footnote{This research is funded by the Excellence Initiative of the German federal and state governments and by the EU FP7 MEALS project.}}
\author{Benjamin Lucien Kaminski \and Joost-Pieter Katoen}
\date{\today}
\maketitle

\begin{abstract}
This paper considers the computational hardness of computing expected outcomes and deciding almost--sure termination of probabilistic programs.
We show that deciding almost--sure termination and deciding whether the expected outcome of a program equals a given rational value is $\Pi^0_2$--complete.
Computing lower and upper bounds on the expected outcome is shown to be recursively enumerable and $\Sigma^0_2$--complete, respectively.
 \end{abstract}
 
\allowdisplaybreaks

\section{Introduction}

Probabilistic programs~\cite{DBLP:journals/jcss/Kozen81} are imperative sequential programs with the ability to draw values at random from probability distributions.
They are used in security to describe cryptographic constructions (such as randomized encryption) and security experiments~\cite{DBLP:journals/toplas/BartheKOB13}, in machine learning to describe distribution functions that are analyzed using Bayesian inference~\cite{DBLP:journals/corr/BorgstromGGMG13}, and in randomized algorithms.
They are typically just a small number of lines, but hard to understand and analyze, let alone algorithmically.

This paper considers a precise classification of the computational hardness of solving two main analysis problems for probabilistic programs: (1) almost--sure termination \cite{pnueli} --- does a program terminate with probability one? --- and (2) computing expected outcomes --- is the expected outcome of a program (variable) equal, smaller, or larger than a given rational number? Expected outcomes correspond to McIver \& Morgan's weakest pre--expectation semantics of pGCL, the probabilistic version of Dijkstra's guarded command language~\cite{mciver}.

Several results on computational hardness in connection with analyzing probabilistic programs have been reported in the literature, like non--recursive--enumerability results for probabilistic rewriting logic \cite{bournez} and decidability results for restricted probabilistic programming languages \cite{murawski}.
A lot of work has also been done towards automated reasoning for almost--sure termination.
For instance \cite{sneyers} gives an overview of some particularly interesting examples of probabilistic logical programs and the according intuition for proving almost--sure termination. 
Arons \emph{et al.} reduce almost--sure termination to termination of a non--deterministic program by means of a planner \cite{arons2003parameterized}.
In \cite{esparza}, a pattern--based approach which exploits this idea together with a prototypical tool support is presented. 

Despite several approaches to tackle the problem of almost--sure termination in an automated manner, the majority of the literature does not consider the hardness of the problem or states that it must intuitively be harder to solve than the termination problem for ordinary, i.e., non--probabilistic programs.
For instance in \cite{onlymorgan} it is noted that while partial correctness for small--scale examples is not harder to prove than for ordinary programs, the case for total correctness of a probabilistic loop must be harder to analyze.
As another example \cite{esparza} suggests that almost--sure termination must be harder to decide than ordinary termination since for the latter a topological argument suffices while for the former arithmetical reasoning is needed. 
 
Aside from the intuition that almost--sure termination must be somewhat harder to decide, to the best of our knowledge, there seems to be yet no \emph{precise} classification of the computational hardness of deciding this major analysis problem.
This gap is bridged by this paper.
Such a precise classification is not only of theoretical interest, but allows for deeper insights into the specific difficulties of dealing with the problem of almost--sure termination and may help in identifying subclasses of programs for which it becomes easier so solve. 
Through our classification we cannot only establish that almost sure--termination is in fact strictly harder to decide than ordinary termination but we can also make a statement on the upper bound of the hardness of the problem.

In this paper we study and formalize the problem of computing expected outcomes and the problem of deciding almost--sure termination, and we establish the following hardness results:
We first show that computing lower bounds on the expected outcome of program variable $v$ by executing a probabilistic program $P$ is recursively enumerable.
Computing upper bounds for the expected outcome is shown to be $\Sigma_2^0$--complete, whereas deciding whether the expected outcome of $v$ equals some rational is shown to be $\Pi_2^0$--complete.
Finally, almost--sure termination of $P$ is shown to be $\Pi_2^0$--complete.
The immediate consequences of the latter result are twofold: (1) deciding almost--sure termination is not only intuitively but provably strictly harder than deciding termination for ordinary programs, and (2) deciding almost--sure termination is \emph{not} harder than deciding whether an \emph{ordinary} program halts on all or infinitely many inputs \cite{odifreddi2}.

Regarding the consequences of the $\Sigma_2^0$--completeness of  computing upper bounds for expected outcomes we note the following: 
If both upper and lower bounds for expected outcomes were recursively enumerable, then  each expected outcome would be a computable real number. 
However, since upper bounds are shown to be not recursively enumerable (implied by the $\Sigma_2^0$--completeness), we can establish that in general it is not possible to approximate expected outcomes from above and from below with arbitrary precision.

Our hardness results are established by reductions from the universal halting problem for ordinary programs.
Remarkably the probabilistic programs we use in our reduction obey a certain syntactic schema, namely that the randomization and the actual computation are strictly separated.
We interpret this as possible evidence for the existence of a ``normal form" for probabilistic programs.

\section{Preliminaries}

In order to have a frame of reference in which we can classify the hardness of calculating expected outcomes of probabilistic programs, we first briefly recall the concept of the arithmetical hierarchy:

\begin{mydef}[Arithmetical Hierarchy \cite{kleeneNF,odifreddi1}]
\label{remarithmetic}
The \textbf{class $\boldsymbol{\Sigma_n^0}$} is defined as
\begin{align*}
\Sigma_n^0 ~=~ \Big\{ \Problem A ~\Big|~ &\Problem A = \big\{ \vec x ~\big|~ \exists y_1\, \forall y_2\, \exists y_3\, \cdots\, \exists/\forall y_n\colon~ (\vec x,\, y_1,\, y_2,\, y_3,\, \ldots,\, y_n) \in \Problem R\big\},\\
&\Problem R \textnormal{ is a decidable relation}  \Big\},
\end{align*}
the \textbf{class $\boldsymbol{\Pi_n^0}$} is defined as
\begin{align*}
\Pi_n^0 ~=~ \Big\{ \Problem A ~\Big|~ &\Problem A = \big\{ \vec x ~\big|~ \forall y_1\, \exists y_2\, \forall y_3\, \cdots\, \exists/\forall y_n\colon~ (\vec x,\, y_1,\, y_2,\, y_3,\, \ldots,\, y_n) \in \Problem R \big\},\\&\Problem R \textnormal{ is a decidable relation} \Big\},
\end{align*}
and the \textbf{class $\boldsymbol{\Delta_n^0}$} is defined as $\Delta_n^0 = \Sigma_n^0 \cap \Pi_n^0$, for every $n\in \bbbn$.

Note that we implicitly always quantify over $\bbbq^+$ and that by the $\vec x$'s we mean tuples over $\bbbq^+$.
\emph{Multiple consecutive quantifiers} \emph{of the same type} can be contracted to \emph{one} quantifier of that type, so the number $n$ really refers to the number of necessary \emph{quantifier alternations} rather than to the number of quantifiers used.

A set $\Problem A$ is called \textbf{arithmetical}, iff $\Problem A \in \Gamma_n^0$, for some $\Gamma \in \{\Sigma,\, \Pi,\, \Delta\}$ and some $n \in \mathbb N$.
The inclusion diagram
\begin{align*}
\begin{array}{r}
\Sigma_n^0\\\\\\\\
\Pi_n^0\vphantom{\Big(}
\end{array} \begin{array}{c}\mathrel{\rotatebox{-35}{\text{\LARGE $\subset$}}}\\\\\mathrel{\rotatebox{+35}{\text{\LARGE $\subset$}}}\end{array}~ \Delta_{n+1}^0 \,\begin{array}{c}\mathrel{\rotatebox{+35}{\text{\LARGE $\subset$}}}\\\vspace{-0.6em}\\\mathrel{\rotatebox{-35}{\text{\LARGE $\subset$}}}\end{array} \begin{array}{l}
\Sigma_{n+1}^0\\\\\\\\
\Pi_{n+1}^0\vphantom{\Big(}
\end{array}
\end{align*}
holds for every $n \geq 1$, thus the arithmetical sets form a strict hierarchy.
Furthermore note that $\Sigma_0^0 = \Pi_0^0 = \Delta_0^0 = \Delta_1^0$ is exactly the class of the decidable sets and $\Sigma_1^0$ is exactly the class of the recursively enumerable sets.
\end{mydef}
Next we recall the concept of many--one reducibility and the concept of completeness. 
Both these notions allow us to precisely classify the hardness of calculating expected outcomes and deciding almost--sure termination.
\begin{mydef}[Many--One Reducibility \cite{odifreddi1,post44}]
Let $\Problem A,\, \Problem B$ be arithmetical and let $X$ be some appropriate universe, such that $\Problem A,\Problem B \subseteq X$.
A set $\Problem A$ is called \textbf{many--one--reducible} to a set $\Problem B$, denoted $\boldsymbol{\Problem A \leq_\textnormal{\textbf{m}} \Problem B}$, iff there exists a computable function $f\colon X \rightarrow X$, such that
\begin{align*}
\forall \vec x \in X\colon \big(\vec x \in \Problem A \Longleftrightarrow f(\vec x) \in \Problem B\big)~.
\end{align*}
If $f$ is a function, such that $f$ many--one reduces $\Problem A$ to $\Problem B$, we denote this by $\boldsymbol{f\colon \Problem A \leqm \Problem B}$.
Note that $\leqm$ is obviously transitive. 
\end{mydef}
\begin{mydef}[$\boldsymbol{\Gamma_n^0}$--Completeness \cite{odifreddi1}]
A set $\Problem A$ is called \textbf{$\boldsymbol{\Gamma_n^0}$--complete}, for $\Gamma \in \{\Sigma,\, \Pi,\, \Delta\}$, iff both $\Problem A \in \Gamma_n^0$ and $\Problem A$ is \textbf{$\boldsymbol{\Gamma_n^0}$--hard}, meaning $\Problem B \leqm \Problem A$, for any set $\Problem B \in \Gamma_n^0$.
\end{mydef}
An important fact about $\Sigma_n^0$-- and $\Pi_n^0$--complete sets is that they are in some sense the most complicated sets in $\Sigma_n^0$ and $\Pi_n^0$, respectively.
Formally, this can be expressed as follows:
\begin{mylem}[Properties of Complete Sets \cite{davis}]
\label{lemcompleteness}
If $\Problem A$ is $\Sigma_n^0$--complete, then $\Problem{A} \in \Sigma_n^0\setminus \Pi_n^0$. Analogously if $\Problem A$ is $\Pi_n^0$--complete, then $\Problem{A} \in \Pi_n^0\setminus \Sigma_n^0$.
\end{mylem}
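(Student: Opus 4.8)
The plan is to exploit two ingredients already available in the excerpt: the strictness of the arithmetical hierarchy (asserted in Definition~\ref{remarithmetic} for every $n \geq 1$) and the closure of the classes $\Sigma_n^0$ and $\Pi_n^0$ under many--one reductions. I will argue the $\Sigma_n^0$ case in detail; the $\Pi_n^0$ case is obtained by interchanging the symbols $\Sigma$ and $\Pi$ (and the quantifier patterns) throughout. Note first that the claim is only of substance for $n \geq 1$, since $\Sigma_0^0 = \Pi_0^0$ is the class of decidable sets; this is also exactly the range in which strictness is stated, so no generality is lost.

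First I would establish the auxiliary closure fact: if $\Problem C \leqm \Problem D$ via a computable $f$ and $\Problem D \in \Pi_n^0$, then $\Problem C \in \Pi_n^0$. This is immediate from the definitions. Writing $\Problem D = \{\vec y \mid \forall y_1\, \exists y_2\, \cdots\, (\vec y, y_1, \ldots, y_n) \in \Problem R\}$ with $\Problem R$ decidable, the reduction property $\vec x \in \Problem C \Leftrightarrow f(\vec x) \in \Problem D$ yields $\Problem C = \{\vec x \mid \forall y_1\, \exists y_2\, \cdots\, (f(\vec x), y_1, \ldots, y_n) \in \Problem R\}$. The relation $\Problem{R'} = \{(\vec x, y_1, \ldots, y_n) \mid (f(\vec x), y_1, \ldots, y_n) \in \Problem R\}$ is still decidable, because $f$ is computable and $\Problem R$ is decidable, so $\Problem C \in \Pi_n^0$. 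The identical argument (with the dual quantifier prefix) shows that $\Sigma_n^0$ is likewise closed under $\leqm$.

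Now the main line of reasoning. By strictness of the hierarchy, fix a witness set $\Problem B \in \Sigma_n^0 \setminus \Pi_n^0$. Since $\Problem A$ is $\Sigma_n^0$--complete it is in particular $\Sigma_n^0$--hard, so $\Problem B \leqm \Problem A$. Suppose, towards a contradiction, that $\Problem A \in \Pi_n^0$. Then the closure fact applied to $\Problem B \leqm \Problem A$ gives $\Problem B \in \Pi_n^0$, contradicting the choice of $\Problem B$. Hence $\Problem A \notin \Pi_n^0$; combined with $\Problem A \in \Sigma_n^0$, which holds by the definition of $\Sigma_n^0$--completeness, this yields $\Problem A \in \Sigma_n^0 \setminus \Pi_n^0$, as claimed.

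I do not expect a genuine obstacle: the proof is a short diagonalization-free argument that merely packages strictness of the hierarchy together with downward closure under $\leqm$. The only points that need to be stated carefully are the verification that precomposition of a decidable relation with a computable function again yields a decidable relation (which is what powers the closure property), and the observation that the interesting regime is $n \geq 1$, matching the hypotheses under which strictness is available.
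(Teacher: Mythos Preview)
Your argument is correct and is the standard proof of this fact: downward closure of $\Pi_n^0$ (resp.\ $\Sigma_n^0$) under $\leqm$, together with the existence of a set in $\Sigma_n^0 \setminus \Pi_n^0$ (resp.\ $\Pi_n^0 \setminus \Sigma_n^0$) supplied by the strictness of the hierarchy for $n \geq 1$, immediately forces any $\Sigma_n^0$--complete set out of $\Pi_n^0$. The only point worth tightening is that the inclusion diagram in Definition~\ref{remarithmetic} does not literally display $\Sigma_n^0 \not\subseteq \Pi_n^0$; you are implicitly using that $\Sigma_n^0 = \Pi_n^0$ would collapse the hierarchy above level $n$, contradicting the stated strict inclusions---this is routine but you may want to say it.

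As for comparison with the paper: there is nothing to compare. The paper does not prove Lemma~\ref{lemcompleteness} at all; it simply records it as a classical result with a citation to \cite{davis}. Your write--up therefore supplies strictly more than the paper does on this point.
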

Lemma \ref{lemcompleteness} implies in particular that for a $\Sigma_n^0$--complete set $\Problem A$ it holds that $\Problem A \not\in \Delta_n^0$.

\section{Probabilistic Programs}

In order to speak about probabilistic programs and the computations performed by such programs, we first briefly introduce their syntax and their semantics:
\begin{mydef}[Syntax of Probabilistic Programs]
Let $\boldsymbol{\textsf{\textbf{Var}}}$ be the \textbf{set of program variables}. 
The \textbf{set $\boldsymbol{\textsf{Prog}}$ of probabilistic programs} is defined inductively as follows: 
For any $v \in \sovars$ and any arithmetical expression $e$ over \sovars (not to be confused with the sets from the arithmetical hierarchy), the assignment $v \ttt{ := } e$ is in \soprogs. Furthermore if $P_1, P_2 \in \soprogs$, $p \in [0,\, 1] \subseteq \bbbq$, and if $b$ is a Boolean expression over arithmetic expressions then the concatenation $P_1\ttt{;}\: P_2$, the probabilistic choice $\ttt{\{}P_1\ttt{\}} \:[p]\: \ttt{\{}P_2\ttt{\}}$, and the while--loop $\ttt{WHILE}\:\ttt{(} $b$ \ttt{)}\:\ttt{\{} P_1 \ttt{\}}$ are also in \soprogs. 
We call the set of programs that \emph{do not} contain any probabilistic choices the \textbf{set of ordinary programs} and denote this set by \textbf{\textsf{ordProg}}.
\end{mydef}
This syntax is a subset of pGCL originated from McIver  and Morgan \cite{mciver}.
We omitted \texttt{skip}--, \texttt{abort}--, and \texttt{if}--statements, as those are syntactic sugar. 
Furthermore, we do not consider (non--probabilistic) non--determinism.
The operational semantics for our programs is given below:
\begin{mydef}[Semantics of Probabilistic Programs]
Let the set of variable valuations be denoted by $\mathbb V = \{\eta ~|~ \eta\colon \sovars \rightarrow \bbbq^+\}$, let the set of program states be denoted by $\bbbs = \big(\soprogs \cup \{{\downarrow}\}\big) \times \mathbb V \times I \times \{L,\, R\}^*$, for $I = [0,\, 1] \subseteq \bbbq^+$, let $\llbracket e \rrbracket_\eta$ be the evaluation of the arithmetic expression $e$ given the variable valuation $\eta$, and analogously let $\llbracket b \rrbracket_\eta$ be the evaluation of the Boolean expression $b$. 
Then the \textbf{semantics of probabilistic programs} is given by the smallest relation ${\vdash} \subseteq \bbbs \times \bbbs$ which satisfies the following inference rules:
\normalsize\begin{align*}
 (\textnormal{assign})&\frac{}{\langle v\textnormal{\texttt{ := }} e,\, \eta,\, a,\, \theta\rangle ~\vdash~  \langle {\downarrow},\, \eta[v \mapsto \max \{\llbracket e \rrbracket_\eta,\, 0\}],\, a,\, \theta\rangle}\\[0.5\baselineskip]
(\textnormal{concat1})&\frac{\langle P_1,\, \eta,\, a,\, \theta\rangle ~\vdash~  \langle P_1',\, \eta',\, a',\, \theta'\rangle}{\langle P_1\textnormal{\texttt{;}}\:P_2,\, \eta,\, a,\, \theta\rangle ~\vdash~  \langle P_1'\textnormal{\texttt{;}}\: P_2,\, \eta',\, a',\, \theta'\rangle}\\[0.5\baselineskip]
(\textnormal{concat2})&\frac{}{\langle {\downarrow}\textnormal{\texttt{;}}\:P_2,\, \eta,\, a,\, \theta\rangle ~\vdash~  \langle P_2,\, \eta,\, a,\, \theta\rangle}\\[0.5\baselineskip]
(\textnormal{prob1})&\frac{}{\langle \{P_1\}\:[p]\:\{P_2\},\, \eta,\, a,\, \theta\rangle ~\vdash~  \langle P_1,\, \eta,\, a \cdot p,\, \theta \cdot L\rangle}\\[0.5\baselineskip]
(\textnormal{prob2})&\frac{}{\langle \{P_1\}\:[p]\:\{P_2\},\, \eta,\, a,\, \theta\rangle ~\vdash~  \langle P_2,\, \eta,\, a \cdot (1 - p),\, \theta \cdot R\rangle}\\[0.5\baselineskip]
(\textnormal{while1})&\frac{\llbracket b \rrbracket_\eta = \textnormal{True}}{\langle \textnormal{\texttt{WHILE}}\:\ttt{(} b \textnormal{\texttt{)}}\:\texttt{\{} P \textnormal{\texttt{\}}},\, \eta,\, a,\, \theta\rangle ~\vdash~  \langle P\textnormal{\texttt{;}}\: \ttt{WHILE}\:\ttt{(} b \textnormal{\texttt{)}}\:\ttt{\{} P \textnormal{\texttt{\}}},\, \eta,\, a,\, \theta\rangle}\\[0.5\baselineskip]
(\textnormal{while2})&\frac{\llbracket b \rrbracket_\eta = \textnormal{False}}{\langle \textnormal{\texttt{WHILE}}\:\ttt{(} b \textnormal{\texttt{)}}\:\ttt{\{} P \textnormal{\texttt{\}}},\, \eta,\, a,\, \theta\rangle ~\vdash~  \langle {\downarrow},\, \eta,\, a,\, \theta\rangle}
\end{align*}\normalsize
We use $\sigma  \vdash^k \tau$ and $\sigma  \vdash^* \tau$ in the usual sense.
Furthermore we write $\sigma \vdash_{(\textnormal{name})} \tau$ if $\tau$ is inferred by the use of the \textnormal{(name)}--rule (for $\textnormal{name} \in \{\textnormal{assign}$, concat1, $\ldots\})$.
\end{mydef}
The semantics is mostly straightforward except for two things: in addition to the program that is to be executed next and the current variable valuation, each state also stores a string $\theta \in \{L,\ R\}^*$ that indicates which probabilistic choices were made in the past ($\boldsymbol L$eft or $\boldsymbol R$ight) as well as the probability $a$ with which those choices were made.
The graph that is spanned by the $\vdash$--relation can be seen as an unfolding of the MDP--semantics for pGCL provided by Gretz \emph{et al.} \cite{gretz}.

\section{Expected Outcomes and Termination Probabilities}

In this section we formally define the notions of the expected outcome of $P \in \soprogs$ as well as its termination probability. We start by some auxiliary notions:
It is a well--known result due to Kleene that for any program state $\sigma$ for which the successor is inferred without the use of the (prob1)-- or the (prob2)--rule, i.e., the next instruction to be executed is \emph{not}  a probabilistic choice, the successor of $\sigma$ is unique and computable:
\begin{mythm}[The State Successor Function T \cite{kleeneNF}]
\label{remT}
Let $\bbbs_o$ be the set of program states for which the successor is inferred without the use of the (prob1)-- or the (prob2)--rule. Then there exists a total computable function $\T\colon \bbbs_o \rightarrow \big(\bbbs \cup \{\top\}\big) $, such that
\abovedisplayskip=0pt\begin{align*}
\T(\sigma) ~=~ \begin{cases}
\tau, &\textnormal{if } \sigma \vdash \tau\\
\top, &\textnormal{if } \sigma = \langle {\downarrow},\, \eta,\, a,\, \theta \rangle.
\end{cases}
\end{align*}
\end{mythm}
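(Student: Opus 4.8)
The plan is to exploit that, once the (prob1)-- and (prob2)--rules are set aside, the relation $\vdash$ is syntax--directed and deterministic: for $\sigma = \langle P, \eta, a, \theta\rangle$ the structure of $P$ --- together, where relevant, with the value of a loop guard or with whether the left operand of a concatenation is ${\downarrow}$ --- determines which single rule derives a transition out of $\sigma$ and what its conclusion is. So $\T$ can be built as a structural recursion on $P$. Beforehand I would fix a standard Gödel numbering making program states genuine finite objects: a program is a finite syntax tree, the history $\theta \in \{L,R\}^*$ and the weight $a \in [0,1]\cap\bbbq$ are finite, and a valuation $\eta$ may be represented by its restriction to the finitely many variables occurring in $P$ (the values at other variables never affect a transition out of $\sigma$). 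With states coded as naturals, ``computable function on $\bbbs_o$'' carries its usual meaning.

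The recursion runs through the shapes of $P$. If $P = {\downarrow}$, output $\top$, matching the second clause of the claim. If $P$ is an assignment $v \ttt{ := } e$, compute $\llbracket e \rrbracket_\eta$ --- evaluation of arithmetic expressions over $\sovars$ is computable --- and output the state prescribed by (assign). If $P = P_1\ttt{;}\,P_2$ with $P_1 = {\downarrow}$, output $\langle P_2, \eta, a, \theta\rangle$ as in (concat2); if $P_1 \neq {\downarrow}$, recurse on $\langle P_1, \eta, a, \theta\rangle$ --- which again lies in $\bbbs_o$, since the prob--free derivation witnessing $\sigma \in \bbbs_o$ must end with (concat1) and hence contains a prob--free derivation for $\langle P_1, \eta, a, \theta\rangle$ as a sub--derivation --- and build the successor $\langle P_1'\ttt{;}\,P_2, \eta', a', \theta'\rangle$ from the returned $\langle P_1', \eta', a', \theta'\rangle$ as in (concat1). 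If $P$ is a while--loop with guard $b$, decide $\llbracket b \rrbracket_\eta$ --- Boolean expression evaluation is decidable --- and output the successor given by (while1) or (while2) according to the truth value. The remaining case, $P$ a probabilistic choice, cannot occur: the only rules that can derive a transition out of such a state are (prob1) and (prob2), so $\sigma \notin \bbbs_o$.

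It then remains to check two things. First, totality: the recursive call in the concatenation case passes to a state whose program component $P_1$ is a strict subprogram of $P$, so the recursion is well--founded and halts on every input; since the primitive operations involved (expression and guard evaluation, updating $\eta$, re--pairing the four components of a state) are all computable, $\T$ is total computable. Second, correctness: an induction on the structure of $P$ shows that for $\sigma \in \bbbs_o$ with $P \neq {\downarrow}$ we have $\sigma \vdash \T(\sigma)$, and that this is the \emph{only} $\tau$ with $\sigma \vdash \tau$ --- each non--probabilistic rule fits at most one head constructor and fixes the successor uniquely --- while for $\sigma = \langle {\downarrow}, \eta, a, \theta\rangle$ no rule applies, in accordance with $\T(\sigma) = \top$.

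I do not expect a genuine obstacle: the statement is in essence Kleene's observation that the one--step transition function of a program is primitive recursive under a suitable encoding. The only points demanding care rather than insight are the encoding of states as finite data --- in particular the handling of $\eta$ --- and making the well--foundedness of the descent through nested concatenations explicit; both are routine.
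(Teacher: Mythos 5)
Your proposal is correct. Note, though, that the paper offers no proof of this statement at all: it is presented as a classical fact and attributed to Kleene via the citation, so there is nothing to compare against. Your structural recursion on the program component --- case analysis on the head constructor, recursing through \textnormal{(concat1)}, deciding guards, and observing that probabilistic-choice states are excluded from $\bbbs_o$ by definition --- is exactly the standard argument one would write out, and your two points of care (encoding valuations by their finite support relevant to $P$, and well-foundedness of the descent through nested concatenations) are the right ones.
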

The successor of a state $\sigma \in \bbbs \setminus \bbbs_o$ is not unique, because the program chooses a left or a right branch with some probability. However, if we resolve the probabilistic choice by providing a symbol $L$ or $R$ that indicates whether the left or the right branch shall be chosen, we can come up with a computable function $\Tp$ which computes a unique successor: 
\begin{mycor}[The Probabilistic State Successor Function $\Tp$]
\label{propTp}
There exists a total computable function $\Tp\colon (\bbbs \setminus \bbbs_o) \times \{L,\, R\} \rightarrow \bbbs$, such that
\begin{align*}
\Tp(\sigma,\, s) ~=~ &\begin{cases}
\tau_L,&\textnormal{if }s = L \textnormal{ and } \sigma \vdash_{\textnormal{(prob1)}}\tau_L\\
\tau_R,&\textnormal{if }s = R \textnormal{ and } \sigma \vdash_{\textnormal{(prob2)}}\tau_R.
\end{cases}
\end{align*}
\end{mycor}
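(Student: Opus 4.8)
The plan is to reduce the construction of $\Tp$ to the already-established computability of the ordinary successor function $\T$ from Theorem~\ref{remT}. First I would observe that the domain $\bbbs \setminus \bbbs_o$ consists precisely of those states $\sigma = \langle P, \eta, a, \theta\rangle$ whose successor is inferred by the (prob1)-- or (prob2)--rule; by a straightforward syntactic inspection of the inference rules, these are exactly the states in which the next instruction to be executed is a probabilistic choice $\ttt{\{}P_1\ttt{\}}\:[p]\:\ttt{\{}P_2\ttt{\}}$ --- that is, $P$ is either literally of that form, or is a concatenation $P_1'\ttt{;}\:P_2'$ whose ``head'' (obtained by peeling off leading concatenations and resolved prefixes via (concat1)/(concat2)) is such a choice. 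I would note that whether $\sigma \in \bbbs \setminus \bbbs_o$ holds, and if so the extraction of the relevant $P_1$, $P_2$, and $p$, is a decidable/computable operation, since it amounts to parsing the syntax tree of $P$.

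Next, given $\sigma \in \bbbs\setminus\bbbs_o$ and $s \in \{L, R\}$, I would define $\Tp(\sigma, s)$ by the explicit case analysis dictated by the (prob1)-- and (prob2)--rules: locate the active probabilistic choice $\ttt{\{}P_1\ttt{\}}\:[p]\:\ttt{\{}P_2\ttt{\}}$ inside $P$ (together with the ``continuation'' context $C[\cdot]$ so that $P = C[\ttt{\{}P_1\ttt{\}}\:[p]\:\ttt{\{}P_2\ttt{\}}]$, where $C$ is a right-nested sequence of concatenations), and output $\langle C[P_1], \eta, a\cdot p, \theta\cdot L\rangle$ if $s = L$ and $\langle C[P_2], \eta, a\cdot(1-p), \theta\cdot R\rangle$ if $s = R$. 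Each of these manipulations --- rational multiplication $a \cdot p$ and $a\cdot(1-p)$, string concatenation $\theta\cdot L$ or $\theta\cdot R$, and the substitution into the syntactic context --- is effective, so the resulting $\Tp$ is a total computable function, and by construction it satisfies the stated equation because the (prob1)-- and (prob2)--rules (lifted through (concat1)) are exactly what determine $\tau_L$ and $\tau_R$. Totality on the whole domain is immediate: every $\sigma \in \bbbs \setminus \bbbs_o$ has, by definition, its successor given by one of the two probabilistic rules, so both branches of the case distinction are always well-defined.

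The only mild subtlety --- which I would treat as the ``main obstacle'', though it is more bookkeeping than difficulty --- is making precise the claim that for a state in $\bbbs\setminus\bbbs_o$ there is a well-defined ``active probabilistic choice'' reachable only through applications of (concat1), and that it is unique. This follows by induction on the structure of $P$: an assignment or a while--loop has its successor inferred by (assign), (while1), or (while2), hence lies in $\bbbs_o$; a probabilistic choice is itself the active choice; and a concatenation $P_1\ttt{;}\:P_2$ inherits its successor rule from $P_1$ via (concat1) if $P_1 \neq {\downarrow}$, or via (concat2) otherwise --- and (concat2) puts the state in $\bbbs_o$. Thus the only way to be outside $\bbbs_o$ is for the leftmost non-$\downarrow$ sub-program to be a probabilistic choice, which pins down $C$, $P_1$, $P_2$, and $p$ uniquely and computably. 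I would remark that this argument is essentially a specialization of Kleene's normal form reasoning underlying Theorem~\ref{remT}, so the corollary genuinely follows ``for free'' once one has $\T$, which is why it is stated as a corollary rather than a theorem.
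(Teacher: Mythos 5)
Your proposal is correct. The paper states this result as an immediate corollary of Theorem~\ref{remT} and supplies no proof of its own; your explicit construction --- deciding membership in $\bbbs \setminus \bbbs_o$ by locating the unique active probabilistic choice along the left spine of concatenations, and then applying (prob1)/(prob2) lifted through (concat1) to produce $\tau_L$ or $\tau_R$ effectively --- is exactly the standard Kleene-style justification the authors intend, so it matches the (omitted) argument in both substance and spirit.
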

While $\T$ and $\Tp$ each compute only the next successor state, we can also define a computable function $\Tps$ that computes the $k$-th successor state according to some sequence $w \in \{L,\, R\}^*$ which tells $\Tps$ how to resolve the probabilistic choices that occur along the computation:
\begin{mycor}[The $k$-th State Successor Function $\Tps$]
\label{corTps}
There exists a total computable function $\Tps\colon \bbbs \times \bbbn \times \{L,\, R\}^* \rightarrow \big(\bbbs \cup \{\top\}\big)$, such that
\begin{align*}
\Tps(\sigma,\, k,\, w) ~=~ \begin{cases}
\tau,&\textnormal{if }\sigma = \langle P,\, \eta,\, a,\, \theta \big\rangle \vdash^k \langle P',\, \eta',\, a',\, \theta \cdot w\big\rangle = \tau\\
\top,&\textnormal{otherwise.}
\end{cases}
\end{align*}
\end{mycor}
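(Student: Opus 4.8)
The plan is to obtain $\Tps$ by iterating the one--step functions $\T$ and $\Tp$: starting from $\sigma$ we perform $k$ steps, and whenever the current configuration calls for a probabilistic choice we consume the next symbol of $w$ in order to resolve it. First I would note that membership in $\bbbs_o$ is decidable --- given $\langle P,\, \eta,\, a,\, \theta\rangle$ one strips the leading concatenations off $P$ and checks syntactically whether the next instruction to be executed is a probabilistic choice $\{P_1\}\,[p]\,\{P_2\}$ --- so on any input state we may effectively branch on whether it lies in $\bbbs_o$ or in $\bbbs \setminus \bbbs_o$.

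Then I would define $\Tps$ by primitive recursion on $k$. For the base case we put $\Tps(\sigma,\, 0,\, w) = \sigma$ if $w$ is the empty word and $\Tps(\sigma,\, 0,\, w) = \top$ otherwise, reflecting that after zero steps the history string must be unchanged. For $k+1$: if $\sigma \in \bbbs_o$, compute $\T(\sigma)$; if $\T(\sigma) = \top$ (which in particular covers $\sigma = \langle {\downarrow},\, \eta,\, a,\, \theta\rangle$) return $\top$, and otherwise recurse with $\Tps(\T(\sigma),\, k,\, w)$ --- a non--probabilistic step leaves the history string untouched, so $w$ is passed on unchanged. If $\sigma \in \bbbs \setminus \bbbs_o$, then the next step is a probabilistic choice: return $\top$ if $w$ is empty, and otherwise split $w = s \cdot w'$ with $s \in \{L,\, R\}$ and recurse with $\Tps(\Tp(\sigma,\, s),\, k,\, w')$, having consumed exactly the symbol that records this choice. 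Computability of $\Tps$ then follows since $\T$ and $\Tp$ are total computable by Theorem \ref{remT} and Corollary \ref{propTp}, the membership test for $\bbbs_o$ and the word operations (emptiness, head, tail) are computable, and primitive recursion preserves computability; totality is clear because every clause either returns or recurses with a strictly smaller numeric argument.

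Correctness --- that $\Tps(\sigma,\, k,\, w) = \tau$ precisely when $\sigma = \langle P,\, \eta,\, a,\, \theta\rangle \vdash^k \langle P',\, \eta',\, a',\, \theta \cdot w\rangle = \tau$ --- I would prove by induction on $k$, unfolding the definition and using that every inference rule either leaves the history string unchanged (all rules but prob1, prob2) or appends exactly one of $L,\, R$ (the rules prob1, prob2). Consequently the symbols of $w$ stand in an order--preserving bijection with the probabilistic steps among the $k$ steps taken, and both premature termination and any length mismatch between $w$ and the number of probabilistic steps land in the $\top$ case. I expect no genuine obstacle here; the construction is essentially bookkeeping, and the only point demanding care is keeping this invariant between $w$ and the history string exact, so that a $w$ that is too short, too long, or simply wrong is mapped to $\top$ rather than to some state.
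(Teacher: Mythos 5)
Your construction is correct and is exactly the argument the paper intends: the corollary is stated without proof as an immediate consequence of Theorem~\ref{remT} and Corollary~\ref{propTp}, obtained by iterating $\T$ and $\Tp$ while consuming the symbols of $w$ at probabilistic choices. Your explicit primitive recursion, the base-case check that $w$ is exhausted, and the $\top$-handling of premature termination and of length mismatches between $w$ and the number of probabilistic steps all match the paper's description of how $\Tps$ is supposed to behave.
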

So \Tps returns a successor state $\tau$, if $\sigma \vdash^k \tau$, whereupon exactly $|w|$ inferences must use the \textnormal{(prob1)}-- or the \textnormal{(prob2)}--rule and the probabilistic choices are resolved according to $w$. Otherwise \Tps returns $\top$.
Note in particular that for both the inference of a terminal state $\langle{\downarrow},\, \eta,\, a,\, \theta\rangle$ within less than $k$ steps as well as the inference of a $k$-th successor state through less than $|w|$ probabilistic choices, the calculation of \Tps will result in $\top$.

In addition to $\Tps$, we will need three more computable operations for defining the expected outcomes and almost--sure termination:
\begin{mycor}
There exist total computable functions $\alpha\colon \big(\bbbs \cup \{\top\}\big) \rightarrow \bbbq^+$, $\wp\colon \big(\bbbs \cup \{\top\}\big) \times \sovars  \rightarrow \bbbq^+$, and $h\colon \mathbb N \rightarrow \{L,\, R\}^*$, such that
\begin{align}
\alpha(\sigma) ~~&{=}~\begin{cases}a,&\textnormal{if } \sigma = \langle {\downarrow},\, \eta,\,  a,\, \theta\rangle\\0,&\textnormal{otherwise,}\end{cases}\\[.5\baselineskip]
\wp(\sigma,\,  v) ~~&{=}~\begin{cases}\eta(v) \cdot a,&\textnormal{if } \sigma = \langle {\downarrow},\, \eta,\,  a,\, \theta\rangle\\0,&\textnormal{otherwise,}\end{cases}\\[.25\baselineskip]
h~&\textnormal{is a computable bijection.}
\end{align}
\end{mycor}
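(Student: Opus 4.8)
\noindent\emph{Proof proposal.}
The plan is to note that, exactly as for the functions $\T$ and $\Tps$ of the preceding corollaries, every program state is a piece of finite data, to fix an effective encoding of $\bbbs \cup \{\top\}$, and then to check that each of the three maps is manifestly computable on codes. A state $\langle P,\, \eta,\, a,\, \theta\rangle$ is coded by a tuple consisting of: the finite syntax tree $P$ (or the symbol ${\downarrow}$) over a fixed finite alphabet; the valuation $\eta$, which for every state we ever evaluate is finitely supported and hence representable as a finite list of pairs $(v,\, \eta(v)) \in \sovars \times \bbbq^+$; the rational $a \in [0,\, 1] \cap \bbbq^+$, i.e.\ a pair of naturals; and the word $\theta \in \{L,\, R\}^*$. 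Both this coding and the corresponding decoding and projection operations are primitive recursive, and we reserve one number to code $\top$.

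Given the encoding, $\alpha$ is computed by: decode the input; if the first component is the terminal symbol ${\downarrow}$, read off and output the rational $a$; otherwise (in particular when the input codes $\top$) output $0$. Decoding, the test whether the first component equals ${\downarrow}$, and copying a rational are all computable, so $\alpha$ is total computable. The function $\wp$ proceeds identically, except that in the terminal case it additionally looks up $\eta(v)$ in the encoded valuation and outputs the rational product $\eta(v) \cdot a$; since rational multiplication is computable, $\wp$ is total computable as well. For $h$ I would take the standard bijection $\bbbn \to \{L,\, R\}^*$ obtained by writing $n+1$ in binary, deleting the leading $1$, and reading the remaining bit string as a word over $\{L,\, R\}$ via $0 \mapsto L$, $1 \mapsto R$ (equivalently, the length--lexicographic enumeration of $\{L,\, R\}^*$); both this map and its inverse are primitive recursive, so $h$ is a computable bijection.

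I do not expect a genuine obstacle here: the whole statement is the bookkeeping observation that states are finite objects and that rational arithmetic, syntax--tree inspection, and finite--string manipulation are effective. The one point deserving a word of care is the valuation component --- although $\eta$ is formally a total function $\sovars \to \bbbq^+$, in every state arising in this paper it has finite support and is therefore part of a genuinely finite code, which is the same tacit convention already underlying the computability of $\T$ and $\Tps$.
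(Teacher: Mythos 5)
Your proposal is correct and matches what the paper intends: the paper states this corollary without proof, treating it as the routine observation that program states are finite data and that decoding, rational arithmetic, and the length--lexicographic enumeration of $\{L,\, R\}^*$ are all effective. Your explicit handling of the encoding (including the finite--support convention for valuations and the reserved code for $\top$) fills in exactly the bookkeeping the paper leaves tacit, so there is nothing to correct.
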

So the function $\alpha$ takes a state $\sigma$ and returns the probability of reaching $\sigma$, and the function $\wp$ takes a state $\sigma$ and a variable $v$ and returns the probability of reaching $\sigma$ multiplied with the value of $v$ in the state $\sigma$.
Both functions do that only if the provided state $\sigma$ is a terminal state. 
Otherwise they return 0.

We now have all the concepts and notations available for defining the expected outcome of a program variable after executing a probabilistic program and for defining the program's termination probability:
\begin{mydef}[Expected Outcomes and Termination Probabilities]
\label{exp}
Let $P \in \soprogs$ and let $\eta_0 \in \mathbb V$, such that $\forall x \in \sovars\colon \eta_0(x) = 0$.
Then starting in $\eta_0$,
\begin{enumerate}
\item the \textbf{expected outcome} of $v \in \sovars$ after executing $P$, denoted $\boldsymbol{\textnormal{\textbf{E}}_P(v)}$, is given by
\begin{align*}
\Exp{P}{v} ~:=~ \sum_{i \in \bbbn} \: \sum_{j \in \bbbn} \wp\Big(\Tps\big(\langle P,\, \eta_0,\, 1,\, \varepsilon\rangle,\, j,\, h(i)\big),\, v \Big)~,
\end{align*}\normalsize
\item the \textbf{probability that $\boldsymbol P$ terminates}, denoted $\boldsymbol{\textnormal{\textbf{Pr}}_P({\downarrow})}$, is given by
\begin{align*}
\Pr_P({\downarrow}) ~:=~ \sum_{i \in \bbbn} \sum_{j \in \bbbn} \alpha\Big(\Tps\big(\langle P,\, \eta_0,\, 1,\, \varepsilon\rangle,\, j,\, h(i)\big) \Big)~.
\end{align*}
\end{enumerate}
\end{mydef}
$\Exp{P}{v}$ is basically equivalent to the expected reward introduced by Gretz \emph{et al.}\ in \cite{gretz} and thereby coincides with the expectation transformer semantics by McIver and Morgan \cite{mciver}.
The main difference is that the expected reward is defined on a Markov decision process, whereas $\Exp{P}{v}$ is defined on its unfolding.
In principle, for $\Exp{P}{v}$ we sum over all possible numbers of inference steps and over all possible sequences for resolving probabilistic choices, using $\wp$ we filter out the terminal states $\sigma$, and finally sum up the values of $\wp(\sigma,\, v)$.

For $\Pr_P({\downarrow})$ we basically do the same thing but we merely sum up the probabilities of reaching final states, thus ignoring the variable valuations, by using $\alpha$ instead of $\wp$. 
Regarding the termination probability of a probabilistic program, the case of almost--sure termination is of special interest:
\begin{mydef}[Almost--Sure Termination]
We say that a program $P$ \textbf{terminates almost--surely} iff $\Pr_P({\downarrow}) = 1$. Consequently, we define the according set $\boldsymbol{\AST} \subset \soprogs$ as 
\begin{align*}
{P \in \AST} ~~\mathrel{:\Longleftrightarrow}~~ {\Pr_P({\downarrow}) = 1}~.
\end{align*}
\end{mydef}
In order to investigate the complexity of calculating $\Exp{P}{v}$, we define three sets: 
$\LEXP$, which relates to the set of lower bounds of $\Exp{P}{v}$, $\REXP$, which relates to the set of upper bounds of $\Exp{P}{v}$, and $\EXP$ which relates to $\Exp{P}{v}$ itself:
\begin{mydef}[$\LEXP$, $\REXP$, and $\EXP$]
The sets $\boldsymbol \LEXP, \boldsymbol \REXP, \boldsymbol \EXP \subseteq \soprogs \times \sovars \times \bbbq$ are defined as follows:
\begin{align*}
(P,\, v,\, q) \in \LEXP ~~&:\Longleftrightarrow~~ q < \Exp{P}{v}\\
(P,\, v,\, q) \in \REXP ~~&:\Longleftrightarrow~~ q > \Exp{P}{v}\\
(P,\, v,\, q) \in \EXP ~~&:\Longleftrightarrow~~  q = \Exp{P}{v}
\end{align*}
\end{mydef}
\section{Hardness of Computing Expected Outcomes}

We now have all definitions available to begin the investigation of the computational hardness of computing expected outcomes.
The first fact we observe is that lower bounds for expected outcomes are recursively enumerable:
\begin{mylem}
\label{LinSigma}
$\LEXP \in \Sigma_1^0$, thus $\LEXP$ is recursively enumerable.
\end{mylem}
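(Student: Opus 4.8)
The plan is to exhibit a decidable relation $\Problem{R}$ such that $(P,v,q) \in \LEXP$ if and only if $\exists N \colon (P,v,q,N) \in \Problem{R}$, which places $\LEXP$ in $\Sigma_1^0$ by Definition~\ref{remarithmetic}. The witness $N$ will be a bound up to which we compute a finite truncation of the double sum defining $\Exp{P}{v}$. Concretely, for a natural number $N$ define the partial sum
\begin{align*}
S_N(P,v) ~=~ \sum_{i = 0}^{N} \sum_{j = 0}^{N} \wp\Big(\Tps\big(\langle P,\, \eta_0,\, 1,\, \varepsilon\rangle,\, j,\, h(i)\big),\, v \Big)~,
\end{align*}
and let $(P,v,q,N) \in \Problem{R}$ hold precisely when $q < S_N(P,v)$. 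Since $\Tps$, $\wp$, and $h$ are all total computable (Corollaries~\ref{corTps} and the subsequent one), and comparison of rationals is decidable, $\Problem{R}$ is a decidable relation.

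The key steps are then: first, observe that every summand $\wp(\cdots)$ is a nonnegative rational, so the net $\big(S_N(P,v)\big)_{N \in \bbbn}$ is nondecreasing and its supremum is exactly $\Exp{P}{v}$ (allowing the value $+\infty$, or more precisely an unbounded supremum); this is just monotone convergence for a sum of nonnegative terms over the countable index set $\bbbn \times \bbbn$, together with the fact that the rectangular exhaustion $\{0,\ldots,N\}^2$ is cofinal in $\bbbn \times \bbbn$. Second, argue the equivalence: if $q < \Exp{P}{v}$, then since $\Exp{P}{v} = \sup_N S_N(P,v)$ there is some $N$ with $q < S_N(P,v)$, hence $(P,v,q,N) \in \Problem{R}$; conversely, if $(P,v,q,N) \in \Problem{R}$ for some $N$, then $q < S_N(P,v) \leq \Exp{P}{v}$, so $(P,v,q) \in \LEXP$. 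This gives $\LEXP = \{(P,v,q) \mid \exists N \colon (P,v,q,N) \in \Problem{R}\} \in \Sigma_1^0$, and by Definition~\ref{remarithmetic} membership in $\Sigma_1^0$ is exactly recursive enumerability.

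There is essentially no hard part here --- the lemma is a soft consequence of the summands being nonnegative rationals and the approximating function being computable. The only point requiring a little care is making the truncation index genuinely finite and computable: one must bound both summation indices $i$ and $j$ by the single witness $N$ (rather than, say, summing the inner series to completion for each $i$, which need not terminate), and rely on the fact that $\Tps$ returns $\top$ and hence $\wp$ returns $0$ whenever the computation has already terminated or taken too few probabilistic branches, so that no genuine contribution is ever missed in the limit. I would also remark in passing that the same construction with $\alpha$ in place of $\wp$ shows that $\{(P,q) \mid q < \Pr_P({\downarrow})\}$ is recursively enumerable, which will be convenient later.
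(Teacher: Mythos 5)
Your proof is correct and follows essentially the same route as the paper: truncate the double sum defining $\Exp{P}{v}$ at a finite bound, use nonnegativity of the summands and computability of $\wp$, $\Tps$, and $h$ to get a decidable relation, and existentially quantify over the truncation bound. The paper uses two existential quantifiers $y_1, y_2$ (contracted to one as permitted by Definition~\ref{remarithmetic}) where you use a single witness $N$; this is an immaterial difference, and your added care about why no contribution is lost in the limit only makes the argument more explicit.
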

\begin{proof}
\abovedisplayskip=-1\baselineskip\belowdisplayskip=-1\baselineskip
\begin{align*}
~&(P,\, v,\,  q) \in \LEXP\\
\Longleftrightarrow~& q < \Exp{P}{v}\\
\Longleftrightarrow~&q < \sum_{i \in \bbbn} \sum_{j \in \bbbn} \wp\Big(\Tps\big(\langle P,\, \eta_0,\, 1,\, \varepsilon\rangle,\, j,\, h(i)\big),\, v \Big)\\
\Longleftrightarrow~&\exists y_1 \exists y_2\colon q < \sum_{i = 0}^{y_1} \sum_{j = 0}^{y_2} \wp\Big(\Tps\big(\langle P,\, \eta_0,\, 1,\, \varepsilon\rangle,\, j,\, h(i)\big),\, v \Big)\\
\Longrightarrow~&\LEXP \in \Sigma_1^0
\end{align*}
\end{proof}

Recursive enumerability of $\LEXP$ means that lower bounds for expected outcomes can be effectively enumerated by some algorithm.

Now, if the set of upper bounds, i.e., $\REXP$, was recursively enumerable as well, then expected outcomes would be computable reals.
However, the contrary will be established over the course of the following lemmas:
\begin{mylem}
\label{RinSigma2}
$\REXP \in \Sigma_2^0$~.
\end{mylem}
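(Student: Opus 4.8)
The plan is to express membership in $\REXP$ using one existential quantifier followed by one universal quantifier over a decidable matrix, which places $\REXP$ in $\Sigma_2^0$. The starting point is the defining condition $(P,\, v,\, q) \in \REXP \Longleftrightarrow q > \Exp{P}{v}$. Since $\Exp{P}{v}$ is a supremum of a non-decreasing sequence of partial sums — namely the finite double sums $S_{n} := \sum_{i=0}^{n}\sum_{j=0}^{n} \wp\big(\Tps(\langle P,\, \eta_0,\, 1,\, \varepsilon\rangle,\, j,\, h(i)),\, v\big)$, which are rational and computable uniformly in $n$ by the computability of $\Tps$ and $\wp$ — the inequality $q > \Exp{P}{v}$ is equivalent to saying that some rational $q'$ strictly below $q$ is an \emph{upper} bound for all the partial sums. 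That is, $q > \Exp{P}{v}$ iff $\exists q' \in \bbbq^+$ with $q' < q$ and $\forall n\colon S_n \leq q'$.

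The key steps, in order, are: first, argue that $\Exp{P}{v} = \sup_{n} S_{n}$, using that $\wp \geq 0$ so all terms are non-negative and reordering/regrouping the double series is harmless. Second, observe that $q > \sup_n S_n$ holds if and only if there is a rational $q'$ strictly between $\sup_n S_n$ and $q$; the "only if" direction uses density of $\bbbq^+$ together with the fact that $q$ is itself rational and strictly above the supremum, so there is genuine room for such a $q'$. Third, rewrite "$q'$ is an upper bound for the sequence" as the universally quantified decidable predicate $\forall n\colon S_n \leq q'$, noting that for fixed $n$ the value $S_n$ is computed by finitely many calls to $\Tps$ and $\wp$ and then compared against the rational $q'$, so the whole matrix "$q' < q \wedge S_n \leq q'$" is decidable in $(P, v, q, q', n)$. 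This yields the prenex form $\exists q'\, \forall n\colon R(P, v, q, q', n)$ with $R$ decidable, hence $\REXP \in \Sigma_2^0$. I would present this as a short chain of equivalences mirroring the style of the proof of Lemma~\ref{LinSigma}.

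The main obstacle is the equivalence $q > \sup_n S_n \Longleftrightarrow \exists q' \in \bbbq^+ \big(q' < q \wedge \forall n\, S_n \leq q'\big)$, specifically getting both directions exactly right. The "$\Leftarrow$" direction is immediate: if $q' < q$ and $q'$ bounds every $S_n$, then $q' \geq \sup_n S_n$, so $q > q' \geq \sup_n S_n$. The "$\Rightarrow$" direction needs care: from $q > \sup_n S_n$ one picks any rational $q'$ with $\sup_n S_n \leq q' < q$ — such a $q'$ exists because the half-open interval $[\sup_n S_n, q)$ is non-empty (it contains $\sup_n S_n$ if that value is rational, and in any case is a non-degenerate interval) — and then $q'$ bounds all $S_n$ by definition of supremum. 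One should also double-check that $\Exp{P}{v}$ is finite or handle $+\infty$ gracefully: if $\Exp{P}{v} = +\infty$ then no rational $q$ satisfies $q > \Exp{P}{v}$ and also no rational $q'$ bounds all $S_n$, so both sides are false and the equivalence still holds. Everything else — decidability of the matrix, contraction of consecutive same-type quantifiers, and the non-negativity argument for the series rearrangement — is routine given the computability results for $\Tps$, $\wp$, and $h$ established earlier in the excerpt.
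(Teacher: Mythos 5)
Your proof is correct and follows essentially the same route as the paper: the paper writes the condition as $\exists \delta > 0\, \forall y_1\, \forall y_2\colon q - \delta > \sum_{i=0}^{y_1}\sum_{j=0}^{y_2}\wp(\cdots)$, which is just your $\exists q'\,(q' < q \wedge \forall n\colon S_n \leq q')$ with $q' = q - \delta$. Your extra care about the $\Rightarrow$ direction and the $+\infty$ case is sound but not a departure from the paper's argument.
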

\begin{proof}
\abovedisplayskip=-1\baselineskip\belowdisplayskip=-1\baselineskip
\begin{align*}
~&(P,\, v,\,  q) \in \REXP\\
\Longleftrightarrow~&q > \Exp{P}{v}\\
\Longleftrightarrow~&q > \sum_{i \in \bbbn} \sum_{j \in \bbbn} \wp\Big(\Tps\big(\langle P,\, \eta_0,\, 1,\, \varepsilon\rangle,\, j,\, h(i)\big),\, v \Big)\\
\Longleftrightarrow~&\exists \delta > 0 \:\forall y_1 \:\forall y_2\colon q - \delta > \sum_{i = 0}^{y_1} \sum_{j = 0}^{y_2} \wp\Big(\Tps\big(\langle P,\, \eta_0,\, 1,\, \varepsilon\rangle,\, j,\, h(i)\big),\, v \Big)\\
\Longrightarrow~&\REXP \in \Sigma_2^0
\end{align*}
\end{proof}

Figure \ref{fig} shows a schematic depiction of the intuition behind the formulae defining $\LEXP$ and $\REXP$, respectively.

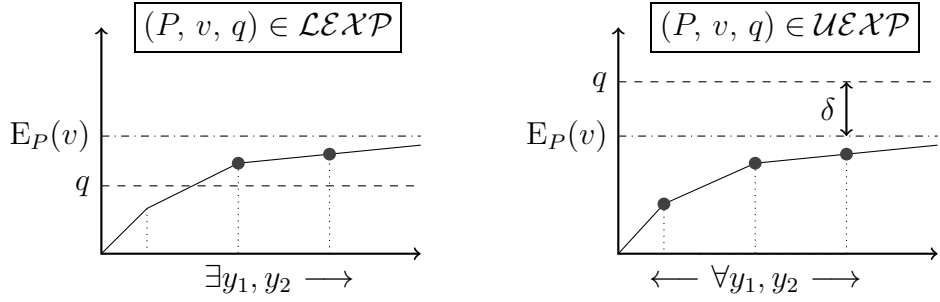
\begin{figure}[t]
\begin{center}
\begin{tikzpicture}[scale=1.2]
    \draw [<->,thick] (0,2.5) node (yaxis) [above] {}
        |- (3.5,0) node (xaxis) [right] {};
    \draw (0,0) coordinate (a_1) -- (0.5,.5) coordinate (a_2);
    \draw (.5,.5) coordinate (a_3) -- (1.5,1) coordinate (a_4);
    \draw (1.5,1) coordinate (a_5) -- (3.5,1.2) coordinate (a_6);
    \draw[dashed] (0,.75) coordinate (b_1) -- (3.5,.75) coordinate (b_2) node[left] at (0,.75) {$q$};
    \coordinate (c1) at (0.5, 0);
    \coordinate (c2) at (0.5, 1);
    \coordinate (c) at (intersection of a_1--a_2 and c1--c2);
    \draw[dotted] (c) -- (xaxis -| c);
    \coordinate (cc1) at (1.5, 0);
    \coordinate (cc2) at (1.5, 1);
    \coordinate (cc) at (intersection of a_3--a_4 and cc1--cc2);
    \draw[dotted] (cc) -- (xaxis -| cc) node[below] at (1.95, 0){${\exists y_1,y_2}~ {\longrightarrow}$};
    \coordinate (ccc1) at (2.5, 0);
    \coordinate (ccc2) at (2.5, 1);
    \coordinate (ccc) at (intersection of a_5--a_6 and ccc1--ccc2);
    \draw[dotted] (xaxis -| ccc) -- (ccc);
    
    \draw[dashdotted] (0, 1.3) coordinate (eeeee1) -- (3.5, 1.3) coordinate (eeeee2);
    \node[left] at (0, 1.3) {$\Exp{P}{v}$};
    \fill[darkgray] (cc) circle (2pt);
    \fill[darkgray] (ccc) circle (2pt);
    \node[] at (1.825, 2.5) {\fbox{$(P,\, v,\, q) \in \LEXP$}};
\end{tikzpicture}
\qquad
\begin{tikzpicture}[scale=1.2]
    \draw [<->,thick] (0,2.5) node (yaxis) [above] {}
        |- (3.5,0) node (xaxis) [right] {};
    \draw (0,0) coordinate (a_1) -- (0.5,.55) coordinate (a_2);
    \draw (.5,.55) coordinate (a_3) -- (1.5,1) coordinate (a_4);
    \draw (1.5,1) coordinate (a_5) -- (3.5,1.2) coordinate (a_6);
    \draw[dashed] (0, 1.9) coordinate (b_1) -- (3.5, 1.9) coordinate (b_2) node[left] at (0, 1.9) {$q$};
     \draw[dashdotted] (0, 1.3)  coordinate (e1) -- (3.5, 1.3) coordinate (e2);
    
    \coordinate (c1) at (0.5, 0);
    \coordinate (c2) at (0.5, 1);
    \coordinate (c) at (intersection of a_1--a_2 and c1--c2);
    \draw[dotted] (c) -- (xaxis -| c);
    \coordinate (cc1) at (1.5, 0);
    \coordinate (cc2) at (1.5, 1);
    \coordinate (cc) at (intersection of a_3--a_4 and cc1--cc2);
    \draw[dotted] (cc) -- (xaxis -| cc) node[below] {${\longleftarrow} ~ {\forall y_1,y_2} ~ {\longrightarrow}$};
    \coordinate (ccc1) at (2.5, 0);
    \coordinate (ccc2) at (2.5, 1);
    \coordinate (ccc) at (intersection of a_5--a_6 and ccc1--ccc2);
    \draw[dotted] (xaxis -| ccc) coordinate (dac) -- (ccc) coordinate (dacc);
    
    \draw[<->, thick] (2.5,1.3) -- (2.5, 1.9) node at (2.3, 1.6) {$\delta$};

     \draw[dashed] (ccc1) -- (.875,0);
    \fill[darkgray] (ccc) circle (2pt);
     \fill[darkgray] (cc) circle (2pt);
      \fill[darkgray] (c) circle (2pt);

   
    \node[left] at (0, 1.3) {$\Exp{P}{v}$};

     \node[] at (1.825, 2.5) {\fbox{$(P,\, v,\, q) \in \REXP$}};
\end{tikzpicture}
\end{center}
\caption{Schematic depiction of the formulae defining $\LEXP$ and $\REXP$, respectively. In each diagram, the solid line represents the monotonically increasing graph of $\sum_{0 \leq i \leq y_1}\: \sum_{0 \leq j \leq y_2} \wp\left(\Tps\left(\left\langle P,\, \eta_0,\, 1,\, \varepsilon\right\rangle,\, j,\, h(i)\right),\, v  \right)$ plotted over increasing $y_1$ and $y_2$.}
\label{fig}
\end{figure}

After establishing $\REXP \in \Sigma_2^0$ there is in principle still hope that $\REXP$ is recursively enumerable as $\Sigma_1^0 \subset \Sigma_2^0$.
We will, however, establish next that $\REXP \in \Sigma_2^0 \setminus \Pi_2^0 \not\supseteq \Sigma_1^0$ meaning that $\REXP$ is much harder to solve than, for instance, the halting problem.
To establish this, we will make use of a well--known $\Pi_2^0$--complete problem, namely the \emph{universal} halting problem for ordinary programs.
\begin{mydef}[The Universal Halting Problem]
The \textbf{universal halting problem} is a subset $\boldsymbol\UHP \subset \sonprogs$, which is characterized as follows:
\begin{align*}
P \in \UHP ~:\Longleftrightarrow~ \forall \eta \: \exists k \: \exists \eta'\colon \langle P,\, \eta,\, 1,\, \varepsilon\rangle \vdash^k \langle {\downarrow},\, \eta',\, 1,\, \varepsilon\rangle
\end{align*}
We denote by $\boldsymbol\cUHP$ the \textbf{complement of $\boldsymbol\UHP$}, i.e., $\cUHP = \sonprogs \setminus \UHP$.
\end{mydef}
In other words, a program $P$ is in $\UHP$, if it terminates its computation after a finite number of steps starting in \emph{any} initial valuation $\eta$. 
A characterization from a more computational point of view would be that $P \in \UHP$ if and only if $P$ satisfies $\forall \eta \: \exists k \: \exists \eta' \colon \Tps\big(\langle P,\, \eta,\, 1,\, \varepsilon\rangle,\, k,\, \varepsilon\big) = \langle {\downarrow},\, \eta',\, 1,\, \varepsilon\rangle$~.

The universal halting problem and its complement satisfy the following completeness properties:
\begin{mythm}[\cite{odifreddi2}]
\label{UHPcomplete}
$\UHP$ is $\Pi_2^0$--complete and $\cUHP$ is $\Sigma_2^0$--complete.
\end{mythm}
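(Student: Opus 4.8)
The plan is to prove the two completeness assertions by splitting them into four parts: $\UHP \in \Pi_2^0$, $\cUHP \in \Sigma_2^0$, $\UHP$ is $\Pi_2^0$--hard, and $\cUHP$ is $\Sigma_2^0$--hard. The two membership claims are direct reformulations of the defining predicates, and the last claim will follow from the third by complementation, so the real content is the $\Pi_2^0$--hardness of $\UHP$.

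For membership I would use the computational characterization noted right after the definition of $\UHP$: a program $P$ lies in $\UHP$ iff $\forall \eta\,\exists k\,\exists \eta'\colon \Tps\bigl(\langle P,\eta,1,\varepsilon\rangle,k,\varepsilon\bigr) = \langle{\downarrow},\eta',1,\varepsilon\rangle$. Since $\Tps$ is total computable (Corollary \ref{corTps}) and the shape of its output state is decidable, the predicate $D(P,\eta,k)$ expressing that $\Tps(\langle P,\eta,1,\varepsilon\rangle,k,\varepsilon)$ is a terminal state is decidable; moreover the witnessing valuation $\eta'$ is uniquely determined by $P,\eta,k$, so the inner $\exists\eta'$ can be dropped. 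Hence $P \in \UHP \iff \forall\eta\,\exists k\colon D(P,\eta,k)$, a $\Pi_2^0$ formula, so $\UHP \in \Pi_2^0$; negating both sides yields $\cUHP \in \Sigma_2^0$.

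For $\Pi_2^0$--hardness I would fix an arbitrary $\Problem{A} \in \Pi_2^0$, say $\Problem{A} = \{\vec x \mid \forall y_1\,\exists y_2\colon (\vec x,y_1,y_2)\in\Problem{R}\}$ with $\Problem{R}$ decidable, and construct a computable map $f$ sending $\vec x$ to an ordinary program $P_{\vec x} \in \sonprogs$ which, started in a valuation $\eta$, copies the value $\eta(x_1)$ of one fixed input variable $x_1$ into a working register (playing the role of $y_1$), after initialising all its scratch variables so that the values of the remaining variables are irrelevant to its execution, and then runs a \ttt{WHILE}--loop that enumerates candidate witnesses $c$ over $\bbbq^+$ via a fixed computable enumeration and, for each $c$, decides whether $(\vec x,y_1,c)\in\Problem{R}$, terminating as soon as a witness is found and diverging otherwise. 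Then $P_{\vec x}$ halts on every initial valuation iff $\forall y_1\,\exists y_2\colon(\vec x,y_1,y_2)\in\Problem{R}$, i.e. iff $\vec x\in\Problem{A}$, so $f\colon \Problem{A}\leqm\UHP$. As $\Problem{A}$ was an arbitrary member of $\Pi_2^0$, $\UHP$ is $\Pi_2^0$--hard, and with the membership step it is $\Pi_2^0$--complete.

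It remains to show $\cUHP$ is $\Sigma_2^0$--hard. For any $\Problem{B}\in\Sigma_2^0$ we have $\overline{\Problem{B}}\in\Pi_2^0$, so by the previous step there is $f\colon\overline{\Problem{B}}\leqm\UHP$; the very same $f$ witnesses $\Problem{B}\leqm\cUHP$, because $\vec x\in\Problem{B}\iff f(\vec x)\notin\UHP\iff f(\vec x)\in\cUHP$. Together with $\cUHP\in\Sigma_2^0$ this gives $\Sigma_2^0$--completeness. The main obstacle is the realizability part of the hardness construction: one must verify that the restricted language $\sonprogs$ (assignments of arithmetic expressions, sequencing, and \ttt{WHILE}--loops) is Turing--complete, so that a computable enumeration of $\bbbq^+$ and the decision procedure for $\Problem{R}$ can actually be coded as ordinary programs, and that the finite datum $\vec x$ can be hard--wired into the program text so that $f$ is computable; this is a standard register/counter--machine simulation (every decidable relation over tuples of $\bbbq^+$ becoming decidable over $\bbbn$ under a computable coding), after which the correctness of the reduction and the halting analysis of $P_{\vec x}$ are routine. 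The result is classical; cf.\ \cite{odifreddi2}.
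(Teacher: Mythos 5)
The paper does not prove this theorem at all: it is imported verbatim from the literature with a citation to Odifreddi, so there is no in-paper argument to compare against. Your proof is the standard one and is correct: the $\forall\eta\,\exists k$ normal form with a decidable kernel gives $\UHP \in \Pi_2^0$ (and $\cUHP \in \Sigma_2^0$ by negation), the program that reads $y_1$ from the initial valuation and searches for a witness $y_2$ gives $\Problem{A} \leqm \UHP$ for arbitrary $\Problem{A} \in \Pi_2^0$, and the same reduction dualizes to $\Sigma_2^0$--hardness of $\cUHP$ since $f$ always lands inside $\sonprogs$. The only detail worth making explicit is that the outer quantifier $\forall \eta$ ranges over valuations of the full (infinite) variable set $\sovars$, so to fit the paper's definition of $\Pi_2^0$ (quantification over tuples of $\bbbq^+$) one should restrict to the finitely many variables occurring in $P$ and code the rest away --- the same device the paper itself uses via the bijection $g_Q$ in the proofs of Lemma \ref{RisSigmacomp} and Theorem \ref{ASTisPicomp}. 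Your acknowledged obligation that $\sonprogs$ is Turing--complete is likewise implicitly assumed throughout the paper, so nothing is missing.
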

Next we will exploit Theorem \ref{UHPcomplete} to establish the $\Sigma_2^0$--completeness of $\REXP$:
\begin{mylem}
\label{RisSigmacomp}
$\REXP$ is $\Sigma_2^0$--complete.
\end{mylem}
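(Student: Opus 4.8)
The plan is to prove $\Sigma_2^0$-hardness of $\REXP$ and then invoke Lemma~\ref{RinSigma2} for membership; together these give $\Sigma_2^0$-completeness. Since $\cUHP$ is $\Sigma_2^0$-complete by Theorem~\ref{UHPcomplete} and $\leqm$ is transitive, it suffices to construct a computable $f$ with $f\colon \cUHP \leqm \REXP$. I would send an ordinary program $P$ to the triple $(\widehat P,\, v,\, 1)$, where $v$ is a variable not occurring in $P$ and $\widehat P$ is a probabilistic program built from $P$ by a fixed template, described next.

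The program $\widehat P$ runs in three stages. (i) A \emph{geometric gadget}: a loop whose body is a single probabilistic choice tossing a fair coin, exiting the loop on one outcome and incrementing a fresh counter variable $n$ on the other. This stage terminates almost surely, and upon exit ``$n = m$'' holds with probability exactly $2^{-m-1}$. (ii) A \emph{decoding stage}: using a fixed computable enumeration $\mu_0, \mu_1, \mu_2, \ldots$ of all assignments of values in $\bbbq^+$ to the (finitely many) variables occurring in $P$, it deterministically decodes the value $m$ of $n$ and installs the values prescribed by $\mu_m$ into the variables of a renamed copy $P'$ of $P$, the renaming keeping $P'$ syntactically disjoint from $n$ and $v$. (iii) It runs $P'$ and finally sets $v$ to $1$. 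Since all variables start at $0$, the variable $v$ ends up $1$ on exactly those computation strands on which the embedded run of $P'$ terminates, and $0$ on all others. Clearly $f\colon P \mapsto (\widehat P,\, v,\, 1)$ is computable.

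It then remains to read off $\Exp{\widehat P}{v}$. As $P$, hence $P'$, is deterministic, the strand carrying counter value $m$ reaches at most one terminal state, and by the definitions of $\wp$ and of $\Exp{\widehat P}{v}$ it contributes $1 \cdot 2^{-m-1}$ if $P$ halts on $\mu_m$ and nothing otherwise; summing over $m$ gives $\Exp{\widehat P}{v} = \sum\{\,2^{-m-1} \mid P \text{ halts on } \mu_m\,\}$. Since halting of an ordinary program depends only on the values of the variables it mentions, and $(\mu_m)_m$ exhausts all such value tuples, $P \in \UHP$ iff $P$ halts on $\mu_m$ for every $m$ iff $\Exp{\widehat P}{v} = \sum_m 2^{-m-1} = 1$; and if $P \notin \UHP$, witnessed by a tuple with index $m_0$, then $\Exp{\widehat P}{v} \leq 1 - 2^{-m_0-1} < 1$. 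Hence $(\widehat P,\, v,\, 1) \in \REXP \iff 1 > \Exp{\widehat P}{v} \iff \Exp{\widehat P}{v} < 1 \iff P \in \cUHP$, so $f\colon \cUHP \leqm \REXP$, and with Lemma~\ref{RinSigma2} we conclude that $\REXP$ is $\Sigma_2^0$-complete.

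The routine parts are the verification that the geometric gadget almost-surely terminates with the claimed geometric distribution, and the syntactic bookkeeping behind the renaming and the fixed always-halting decoding routine. The step needing genuine care is the reduction of the quantifier $\forall \eta$ in the definition of $\UHP$ --- which formally ranges over the uncountable valuation set $\mathbb V$ --- to the countable, computably enumerable family $(\mu_m)_m$: this is sound precisely because an ordinary program's halting behaviour is unaffected by variables it never reads or writes, so I must make sure the decoding in stage (ii) really enumerates every rational assignment to exactly the variables of $P$, leaving none out.
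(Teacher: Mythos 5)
Your proof is correct, but your reduction is genuinely different from the paper's. Both arguments show $\cUHP \leqm \REXP$ and both randomize over inputs via a geometric gadget, but they convert halting information into an expected value by different mechanisms. You run the embedded program $P'$ \emph{unboundedly} on the sampled input and append \texttt{v := 1}, so that $\Exp{\widehat P}{v}$ equals the termination probability $\sum_m 2^{-m-1}[P \text{ halts on } \mu_m]$, which drops below $1$ exactly when some input is missed; this is in effect the composition of the paper's later reductions $\UHP \leqm \AST$ (Theorem~\ref{ASTisPicomp}) and $\AST \leqm \EXP$ (Lemma~\ref{ASTmEXP}), applied to the complement, and it is simpler: one geometric loop, no step counter, no reweighting. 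The paper instead samples \emph{two} geometric values $i$ and $s$, runs a clocked, always-terminating simulation of $Q$ on input $g_Q(i)$ for $s$ steps, and stores $2^{s+1}$ in $v$ to cancel the probability of guessing $s$; the payoff is that the constructed program terminates on every run (indeed surely, not just almost surely), so the hardness holds even within the class of terminating programs and supports the authors' ``normal form'' remark that randomization and a total deterministic computation can be strictly separated. Your construction sacrifices that extra property but establishes the stated lemma just as well, and, as you note implicitly via $\Exp{\widehat P}{v}=1 \iff P \in \UHP$, it would also serve as the reduction in Theorem~\ref{EisPicomp}. You correctly isolate the one point needing care --- replacing $\forall \eta$ over all of $\mathbb V$ by the countable family of valuations supported on the variables of $P$ --- which is the same device the paper packages into the bijection $g_Q$.
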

\begin{proof}
By Lemma \ref{RinSigma2} we have $\REXP \in \Sigma_2^0$, so it remains to show that $\REXP$ is $\Sigma_2^0$--hard:
We do this by proving $\cUHP \leqm \REXP$.
Consider the following function $f\colon \cUHP \leqm \REXP$: $f$ takes an ordinary program $Q \in \sonprogs$ as its input and returns the triple $(P,\, v,\, 1)$, where $v$ does not occur in $Q$ and $P \in \soprogs$ is the following probabilistic program:
\begin{alltt}
\(i\) := 0; \{continue := 0\} [0.5] \{continue := 1\};
while (continue \(\neq\) 0)\{
    \(i\) := \(i\) + 1;
    \{continue := 0\} [0.5] \{continue := 1\}
\};
\(s\) := 0; \{continue := 0\} [0.5] \{continue := 1\};
while (continue \(\neq\) 0)\{
    \(s\) := \(s\) + 1;
    \{continue := 0\} [0.5] \{continue := 1\}
\};
\(v\) := 0; \(TQ\)
\end{alltt}
$TQ$ is a program that computes $\wp \big(\Tps\big(\big\langle Q\texttt{;}\:v \texttt{ := 1},\, g_Q(i),\, 1,\, \varepsilon \big\rangle,\, s,\, \varepsilon\big),\, v\big) \cdot 2^{s+1}$ and stores the result in the variable $v$, and $g_Q\colon \bbbn \rightarrow \mathbb V$ is some computable bijection, such that $\forall z \in \sovars\colon \big[g_Q(i)\big](z) \neq 0$ implies that $z$ occurs in $Q$.

\textit{\underline{Partial Correctness:}}  $\wp \big(\Tps\big(\big\langle Q\texttt{;}\:v \texttt{ := 1},\, g_Q(i),\, 1,\, \varepsilon \big\rangle,\, s,\, \varepsilon\big),\, v\big) \cdot 2^{s+1}$ returns $2^{s+1}$ if and only if $Q$ halts on input $g_Q(i)$ after exactly $s$ steps (otherwise 0), because only then, the variable $v$ is set to 1 after executing the program $Q\texttt{;}\:v \texttt{ := 1}$ for $s$ steps.
 The two while--loops generate independent geometric distributions with parameter $0.5$ on $i$ and $s$, respectively, so the probability of generating exactly the numbers $i$ and $s$ is $(2^i \cdot 2^s)^{-1}$. The expected value of $v$ after executing the program $P$ is hence
\begin{align*}
&\sum_{i\in\bbbn} \sum_{s\in\bbbn} \frac{1}{2^i \cdot 2^s} \cdot \wp\bigg(\Tps\Big(\big\langle Q\texttt{;}\:v \texttt{ := 1},\, g_Q(i),\, 1,\, \varepsilon \big\rangle,\, s + 1,\, \varepsilon\Big),\, v\bigg) \cdot 2^{s+1}~.
\end{align*}
Since for each input, the number of steps until termination is either unique or does not exist, the formula for the expected outcome reduces to $\sum_{i\in\bbbn} 2^{-i} \cdot 2= 1$ if and only if $Q$ halts on every input after some finite number of steps.
Thus if there exists an input on which $Q$  \emph{does not} eventually halt, then $(P,\, v,\, 1) \in \REXP$ as then the expected value is strictly less than one.
If, on the other hand, $Q$ \emph{does} halt on every input, then the expected value is exactly one and hence $(P,\, v,\, 1) \not\in \REXP$.

\textit{\underline{Total Correctness:}}
It is an easy but tedious exercise to construct a program computing $g_Q(i)$ given only $Q$. 
Program code for $\wp$, $\Tps$, multiplication and potentiation is also computable.
So in total, the program code for $P$  and thereby the triple $(P,\, v,\, 1)$ is computable.

By Theorem \ref{UHPcomplete}, $\cUHP$ is $\Sigma_2^0$--complete, so for any $\Problem{A} \in \Sigma_2^0$ it holds that $\Problem A \leqm \cUHP$.
Since we have just proven that $\cUHP \leqm \REXP$, it follows that $\Problem A \leqm \cUHP \leqm \REXP$, and by transitivity $\Problem A \leqm \REXP$.
\end{proof}

Finally, it follows from Lemma \ref{lemcompleteness} that membership for $\REXP$ is in some sense the hardest problem in $\Sigma_2^0$:
\begin{mycor}
\label{RinPinotSigma}
$\REXP \in \Sigma_2^0 \setminus \Pi_2^0$~.
\end{mycor}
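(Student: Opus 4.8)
The plan is to derive this immediately from the two facts already in hand: Lemma \ref{RisSigmacomp} tells us that $\REXP$ is $\Sigma_2^0$--complete, and Lemma \ref{lemcompleteness} tells us that any $\Sigma_n^0$--complete set lies in $\Sigma_n^0 \setminus \Pi_n^0$. First I would simply instantiate Lemma \ref{lemcompleteness} at $n = 2$ with $\Problem{A} = \REXP$: since $\REXP$ is $\Sigma_2^0$--complete, it follows directly that $\REXP \in \Sigma_2^0 \setminus \Pi_2^0$, which is exactly the claim.

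There is essentially no obstacle here, since all the work was done in proving Lemma \ref{RisSigmacomp} (the reduction $\cUHP \leqm \REXP$) and Lemma \ref{RinSigma2} (membership in $\Sigma_2^0$). If one wanted to avoid invoking Lemma \ref{lemcompleteness} as a black box, the argument would unfold as follows: suppose for contradiction that $\REXP \in \Pi_2^0$ as well, so $\REXP \in \Delta_2^0$. Since $\REXP$ is $\Sigma_2^0$--hard and $\leqm$ transports membership downward in the hierarchy, every set in $\Sigma_2^0$ would many--one reduce to a set in $\Pi_2^0$, forcing $\Sigma_2^0 \subseteq \Pi_2^0$. But the strictness of the arithmetical hierarchy (stated in Definition \ref{remarithmetic}, the inclusion diagram) gives $\Sigma_2^0 \neq \Pi_2^0$, a contradiction. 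Hence $\REXP \notin \Pi_2^0$, and combined with $\REXP \in \Sigma_2^0$ from Lemma \ref{RinSigma2} we obtain $\REXP \in \Sigma_2^0 \setminus \Pi_2^0$.

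The only point requiring a little care is that Lemma \ref{lemcompleteness} is stated for complete sets and closure of $\Pi_2^0$ under many--one reductions is what makes the contradiction go through; both are standard and already available in the excerpt. So the corollary is a one--line consequence, and I would present it as such, citing Lemma \ref{RisSigmacomp} and Lemma \ref{lemcompleteness}.
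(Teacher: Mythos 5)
Your proposal is correct and matches the paper exactly: the corollary is stated as an immediate consequence of Lemma~\ref{RisSigmacomp} ($\Sigma_2^0$--completeness of $\REXP$) combined with Lemma~\ref{lemcompleteness}, which is precisely your one--line argument. Your optional unpacking of Lemma~\ref{lemcompleteness} via the strictness of the hierarchy is also sound, though the paper does not include it.
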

\begin{myrem}
In the probabilistic program $P$ used in the proof of Lemma \ref{RisSigmacomp}, the randomization and the actual computation are completely separated.
Since $P$ in a sense ``solves" the universal halting problem, we interpret this as possible evidence that $P$ is in some sort of ``normal form" for probabilistic programs.
\end{myrem}

We now go on with characterizing the complexity of the set $\EXP$, which is the set we are mainly interested in, when it comes to expected outcomes.
As a first result we establish the following:
\begin{mylem}
\label{EinPi2}
$\EXP \in \Pi_2^0$~.
\end{mylem}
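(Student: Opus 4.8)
The statement $\EXP \in \Pi_2^0$ says that $q = \Exp{P}{v}$ can be written with a $\forall\exists$ quantifier prefix over a decidable matrix. The natural approach is to decompose the equality $q = \Exp{P}{v}$ into two inequalities, $q \leq \Exp{P}{v}$ and $q \geq \Exp{P}{v}$, i.e. into membership in (roughly) $\LEXP$-type and $\REXP$-type sets, and then observe that $\Pi_2^0$ is closed under finite intersection (equivalently, under conjunction of predicates). Concretely: $q = \Exp{P}{v}$ holds iff $\neg\big(q < \Exp{P}{v}\big)$ and $\neg\big(q > \Exp{P}{v}\big)$, and the partial sums
\begin{align*}
S(y_1, y_2) ~:=~ \sum_{i=0}^{y_1}\sum_{j=0}^{y_2} \wp\Big(\Tps\big(\langle P,\, \eta_0,\, 1,\, \varepsilon\rangle,\, j,\, h(i)\big),\, v\Big)
\end{align*}
are computable and monotonically nondecreasing in both arguments, with supremum $\Exp{P}{v}$.

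**Key steps.** First I would recall from the proof of Lemma \ref{LinSigma} that $q < \Exp{P}{v} \iff \exists y_1 \exists y_2\colon q < S(y_1,y_2)$, so its negation $q \geq \Exp{P}{v}$ is $\forall y_1 \forall y_2\colon q \geq S(y_1,y_2)$, which is a $\Pi_1^0$ predicate (single block of universal quantifiers over a decidable matrix). Second, from the proof of Lemma \ref{RinSigma2}, $q > \Exp{P}{v} \iff \exists\delta > 0\,\forall y_1\forall y_2\colon q - \delta > S(y_1,y_2)$, so its negation $q \leq \Exp{P}{v}$ is $\forall \delta > 0\,\exists y_1 \exists y_2\colon q - \delta \leq S(y_1,y_2)$, which is a $\Pi_2^0$ predicate. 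Third, conjoining the two: $q = \Exp{P}{v}$ is equivalent to
\begin{align*}
\big(\forall y_1 \forall y_2\colon q \geq S(y_1,y_2)\big) ~\wedge~ \big(\forall \delta > 0\,\exists y_1 \exists y_2\colon q - \delta \leq S(y_1,y_2)\big).
\end{align*}
Finally I would merge these into a single $\forall\exists$ prefix: pull the leading universal quantifiers together (quantifying over $\delta$, and two fresh copies of the $y$'s), and note the matrix is then a conjunction of decidable predicates, hence decidable. This yields the $\Pi_2^0$ form of Definition \ref{remarithmetic}. The step that needs a little care is the very last one — combining a $\Pi_1^0$ conjunct with a $\Pi_2^0$ conjunct and contracting consecutive quantifiers of the same type — but this is exactly the "multiple consecutive quantifiers of the same type can be contracted" remark following Definition \ref{remarithmetic}, so it is routine rather than a genuine obstacle.

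**Main obstacle.** There is no serious obstacle here; the only thing to watch is making the quantifier bookkeeping explicit and honest — in particular that the existential witnesses $y_1, y_2$ in the second conjunct must be allowed to depend on $\delta$, which is automatic once $\delta$ is placed in the outer universal block, and that the matrix really is decidable because $\Tps$, $\wp$, and $h$ are total computable (Corollaries \ref{corTps} and the subsequent one) and rational arithmetic / comparison is decidable. So the proof is essentially a short chain of logical equivalences followed by an appeal to closure of $\Pi_2^0$ under conjunction and quantifier contraction.
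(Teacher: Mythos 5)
Your proposal is correct and follows essentially the same route as the paper: both decompose $q = \Exp{P}{v}$ into the negations of the $\Sigma_1^0$ characterization of $\LEXP$ and the $\Sigma_2^0$ characterization of $\REXP$, conjoin the resulting $\Pi_1^0$ and $\Pi_2^0$ predicates, and contract the quantifier prefix into a single $\forall\exists$ form. The only cosmetic difference is that the paper argues abstractly via decidable relations $\Problem U$ and $\Problem L$ extracted from Lemmas \ref{RinSigma2} and \ref{LinSigma}, whereas you instantiate them with the explicit partial sums.
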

\begin{proof}
By Lemma \ref{RinSigma2}, there exists a decidable relation $\Problem U$, such that $(P,\, v,\, x) \in \REXP$ iff $\exists r_1 \forall r_2 \colon (r_1,\, r_2,\, P,\, v,\, x) \in \Problem U$.
Furthermore from Lemma \ref{LinSigma} it follows that there exists a decidable relation $\Problem L$, such that $(P,\, v,\, x) \in \LEXP \Longleftrightarrow \exists \ell \colon (\ell,\, P,\, v,\, x) \in \Problem L$.
Let $\neg \Problem U$ and $\neg \Problem L$ be the (decidable) negations of $\Problem U$ and $\Problem L$, respectively, then:
\belowdisplayskip=-1\baselineskip
\begin{align*}
~&(P,\, v,\, q) \in \EXP\\
\Longleftrightarrow~&q = \Exp{P}{v}\\
\Longleftrightarrow~&q \leq \Exp{P}{v} \,\wedge\, q \geq \Exp{P}{v}\\
\Longleftrightarrow~&\neg \big(q > \Exp{P}{v}\big) \,\wedge\, \neg \big(q < \Exp{P}{v}\big)\\
\Longleftrightarrow~&\neg \big(\exists r_1\: \forall r_2 \colon (r_1,\, r_2,\, P,\, v,\, q) \in \Problem U\big) \,\wedge\, \neg \big(\exists \ell \colon (\ell,\, P,\, v,\, q) \in \Problem L\big)\\
\Longleftrightarrow~&\big(\forall r_1\: \exists r_2 \colon (r_1,\, r_2,\, P,\, v,\, q) \in \neg \Problem U\big) \,\wedge\, \big(\forall \ell \colon (\ell,\, P,\, v,\, q) \in \neg \Problem L\big)\\
\Longleftrightarrow~&\forall r_1\: \forall \ell\: \exists r_2 \colon (r_1,\, r_2,\, P,\, v,\, q) \in \neg \Problem U \,\wedge\, (\ell,\, P,\, v,\, q) \in \neg \Problem L\\
\Longrightarrow~&\EXP \in \Pi_2^0
\end{align*}
\end{proof}

Intuitively the above proof asserts that we check whether $q = \Exp{P}{v}$ by deciding both $q \leq \Exp{P}{v}$ and $q \geq \Exp{P}{v}$ and that this check can be done by deciding a $\Pi_2^0$--relation.
Furthermore, we now establish the main theorem showing that $\EXP$ is $\Pi_2^0$--complete, thus extremely hard to solve:
\begin{mythm}
\label{EisPicomp}
$\EXP$ is $\Pi_2^0$--complete.
\end{mythm}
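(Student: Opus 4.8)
The plan is to reuse, essentially verbatim, the reduction constructed for Lemma~\ref{RisSigmacomp}. Since $\EXP \in \Pi_2^0$ has already been established in Lemma~\ref{EinPi2}, all that remains is to prove that $\EXP$ is $\Pi_2^0$--hard. I would do this by showing $\UHP \leqm \EXP$: as $\UHP$ is $\Pi_2^0$--complete by Theorem~\ref{UHPcomplete}, every $\Problem A \in \Pi_2^0$ then satisfies $\Problem A \leqm \UHP \leqm \EXP$, and hence $\Problem A \leqm \EXP$ by transitivity of $\leqm$, which is exactly $\Pi_2^0$--hardness.

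For the reduction I would take the very same computable function $f$ from the proof of Lemma~\ref{RisSigmacomp}, which maps an ordinary program $Q \in \sonprogs$ to the triple $(P,\, v,\, 1)$, where $P \in \soprogs$ is the probabilistic program built there (two independent geometric loops generating $i$ and $s$, followed by $v \texttt{ := } 0\texttt{;}\, TQ$). The partial--correctness analysis already carried out in that proof shows that $\Exp{P}{v}$ is always bounded above by $1$, that it equals $1$ exactly when $Q$ halts on every input, and that it is strictly below $1$ otherwise. Consequently $Q \in \UHP$ if and only if $\Exp{P}{v} = 1$, i.e.\ if and only if $(P,\, v,\, 1) \in \EXP$; and computability of $f$ is the same ``total correctness'' observation as before (code for $g_Q$, for $\wp$, for $\Tps$, and for multiplication and exponentiation is computable, and from these the source text of $P$ is assembled uniformly in $Q$). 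Hence $f\colon \UHP \leqm \EXP$, and combining this with Lemma~\ref{EinPi2} gives that $\EXP$ is $\Pi_2^0$--complete.

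I do not expect a genuine obstacle here, since the heavy lifting was done in Lemma~\ref{RisSigmacomp}; the one point worth making explicit is why the \emph{same} program $P$ serves both lemmas. The reason is the uniform upper bound $\Exp{P}{v} \le 1$ on the image of $f$: this makes the three conditions ``$\Exp{P}{v} \neq 1$'', ``$\Exp{P}{v} < 1$'', and ``$1 > \Exp{P}{v}$'' coincide there, so the reduction witnessing $\cUHP \leqm \REXP$ is simultaneously a reduction witnessing $\UHP \leqm \EXP$ (merely with the complemented source and the target $\EXP$ in place of $\REXP$). If one preferred an entirely self--contained argument, the same program $P$ could simply be re--described and the short case analysis $\Exp{P}{v} = 1 \iff Q \in \UHP$ repeated, but invoking Lemma~\ref{RisSigmacomp} is cleaner.
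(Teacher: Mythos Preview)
Your proposal is correct and follows essentially the same approach as the paper: invoke Lemma~\ref{EinPi2} for membership, reuse the reduction $f$ from Lemma~\ref{RisSigmacomp} to obtain $\UHP \leqm \EXP$, and conclude $\Pi_2^0$--hardness via Theorem~\ref{UHPcomplete} and transitivity. Your additional remark on the uniform bound $\Exp{P}{v} \le 1$ explaining why the same $f$ serves both reductions is a nice clarification but not a departure from the paper's argument.
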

\begin{proof}
By Lemma \ref{EinPi2}, $\EXP \in \Pi_2^0$, so it remains to show that $\EXP$ is $\Pi_2^0$--hard.
We do this by proving $\UHP \leqm \EXP$.
Consider again the function $f$ from the proof of Lemma \ref{RisSigmacomp}: Given an ordinary program $Q$, $f$ computes the triple $(P,\, v,\, 1)$, where $P$ is a probabilistic program $P$ which has an expected outcome of one for the variable $v$ if and only if $Q$ terminates on all inputs, which is nothing else than $Q \in \UHP$.
Thus $f\colon \UHP \leqm \EXP$.

By Theorem \ref{UHPcomplete}, $\UHP$ is $\Pi_2^0$--complete, so for any $\Problem{A} \in \Pi_2^0$ it holds that $\Problem A \leqm \UHP$.
Since we have just proven that $\UHP \leqm \EXP$, it follows that $\Problem A \leqm \UHP \leqm \EXP$, and by transitivity $\Problem A \leqm \EXP$.
\end{proof}

It now follows from Lemma \ref{lemcompleteness} that membership for $\EXP$ is in some sense the hardest problem in $\Pi_2^0$:
\needspace{2\baselineskip}
\begin{mycor}
\label{EinPinotSigma}
$\EXP \in \Pi_2^0 \setminus \Sigma_2^0$~.
\end{mycor}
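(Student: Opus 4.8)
The statement to prove is Corollary~\ref{EinPinotSigma}: $\EXP \in \Pi_2^0 \setminus \Sigma_2^0$.

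The plan is to derive this immediately from the two facts already in hand, exactly as the parallel Corollary~\ref{RinPinotSigma} was derived. First I would recall Theorem~\ref{EisPicomp}, which states that $\EXP$ is $\Pi_2^0$--complete. Then I would invoke Lemma~\ref{lemcompleteness} (Properties of Complete Sets), whose second half asserts precisely that any $\Pi_2^0$--complete set lies in $\Pi_2^0 \setminus \Sigma_2^0$. Applying this lemma to $\Problem A = \EXP$ gives the claim directly.

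So the proof is essentially a one-liner: by Theorem~\ref{EisPicomp} the set $\EXP$ is $\Pi_2^0$--complete, hence by Lemma~\ref{lemcompleteness} we have $\EXP \in \Pi_2^0 \setminus \Sigma_2^0$. I would also note the interpretive consequence that $\EXP \notin \Delta_2^0$, mirroring the remark made after Lemma~\ref{lemcompleteness}, namely that deciding whether an expected outcome equals a given rational is strictly harder than any $\Sigma_2^0$ problem and in particular cannot be decided with the help of, say, an oracle for the ordinary halting problem alone.

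There is no real obstacle here: all the work was done in establishing $\Pi_2^0$--completeness in Theorem~\ref{EisPicomp} (which in turn rested on the reduction $\UHP \leqm \EXP$ built in Lemma~\ref{RisSigmacomp} and the $\Pi_2^0$--membership of Lemma~\ref{EinPi2}), and the separation of the arithmetical hierarchy packaged in Lemma~\ref{lemcompleteness}. The only thing to be careful about is citing the correct half of Lemma~\ref{lemcompleteness} --- the $\Pi$-version, giving membership in $\Pi_2^0 \setminus \Sigma_2^0$ rather than the $\Sigma$-version --- and making sure the statement is phrased as a consequence rather than re-proving anything. Hence the proof I would write is simply:

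\begin{proof}
By Theorem~\ref{EisPicomp}, $\EXP$ is $\Pi_2^0$--complete. Thus by Lemma~\ref{lemcompleteness} it follows immediately that $\EXP \in \Pi_2^0 \setminus \Sigma_2^0$. In particular, $\EXP \notin \Delta_2^0$.
\end{proof}
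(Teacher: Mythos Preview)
Your proposal is correct and matches the paper's approach exactly: the paper states (without a separate proof environment) that the corollary ``follows from Lemma~\ref{lemcompleteness}'' applied to the $\Pi_2^0$--completeness of $\EXP$ established in Theorem~\ref{EisPicomp}. Your write-up is just a slightly more explicit version of the same one-line argument.
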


\section{Hardness of Deciding Almost--Sure Termination}
In this section, we turn towards the problem of almost--sure termination and establish completeness results for this problem.
We first establish that almost--sure termination is many--one reducible to $\EXP$ and thereby lays in $\Pi_2^0$:
\begin{mylem}
\label{ASTmEXP}
$\AST \leqm \EXP$~.
\end{mylem}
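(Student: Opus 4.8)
The plan is to exhibit a computable function $f$ witnessing $\AST \leqm \EXP$. Given $P \in \soprogs$, I would set $f(P) = (P,\, v,\, 1)$ where $v$ is some fixed program variable that does not occur in $P$ (so that $v$ retains its initial value $0$ throughout any execution of $P$, by the assumption $\eta_0(v) = 0$). The key observation is that with such a $v$, the term $\wp(\sigma,\, v)$ used in the definition of $\Exp{P}{v}$ differs from the term $\alpha(\sigma)$ used in the definition of $\Pr_P({\downarrow})$ only by the factor $\eta(v) = 0$, which is useless; so instead I would modify $P$ slightly.

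Concretely, I would let $f(P) = (P',\, v,\, 1)$ where $P'$ is the program ``$v \ttt{ := } 1\ttt{;}\: P$'' (again with $v$ fresh, not occurring in $P$, so $P$ never reads or writes $v$). Then in every terminal state $\langle {\downarrow},\, \eta,\, a,\, \theta\rangle$ reachable from $\langle P',\, \eta_0,\, 1,\, \varepsilon\rangle$ we have $\eta(v) = 1$, hence $\wp(\langle {\downarrow},\, \eta,\, a,\, \theta\rangle,\, v) = \eta(v) \cdot a = a = \alpha(\langle {\downarrow},\, \eta,\, a,\, \theta\rangle)$, while on non--terminal states and on $\top$ both functions return $0$. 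Therefore, term by term in the double sum of Definition \ref{exp}, the summands for $\Exp{P'}{v}$ and for $\Pr_{P'}({\downarrow})$ coincide, giving $\Exp{P'}{v} = \Pr_{P'}({\downarrow})$. Moreover the leading assignment $v \ttt{ := } 1$ terminates deterministically in one step and does not affect the computation of $P$, so $\Pr_{P'}({\downarrow}) = \Pr_P({\downarrow})$. Combining, $\Exp{P'}{v} = \Pr_P({\downarrow})$.

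From this identity the reduction is immediate: $P \in \AST$ iff $\Pr_P({\downarrow}) = 1$ iff $\Exp{P'}{v} = 1$ iff $(P',\, v,\, 1) \in \EXP$, which is exactly $f(P) \in \EXP$. It remains to note that $f$ is computable: syntactically prepending an assignment and choosing a fresh variable is a trivial effective operation on program text, so $f$ is a total computable function, and hence $f\colon \AST \leqm \EXP$.

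I do not expect a serious obstacle here; the only things to be careful about are (1) ensuring $v$ is genuinely fresh so that $P$ neither reads nor writes it, which keeps $\eta(v) = 1$ invariant across the whole run and keeps $\Pr_{P'}({\downarrow}) = \Pr_P({\downarrow})$, and (2) checking the edge cases in the term--by--term comparison of the two double sums (non--terminal states, the $\top$ value, and states where the run has not yet reached $P$ but is still executing the prepended assignment — for those $v$ may not yet equal $1$, but such states are never terminal, so both $\alpha$ and $\wp$ return $0$ there anyway). Both points are routine once stated precisely.
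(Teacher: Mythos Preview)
Your proposal is correct and follows essentially the same idea as the paper: modify the input program so that a distinguished variable $v$ has value $1$ in every terminal state, which forces $\Exp{P'}{v} = \Pr_{P'}({\downarrow})$ and reduces almost--sure termination to membership in $\EXP$. The only cosmetic difference is placement: the paper uses $v \ttt{ := } 0\ttt{;}\: Q\ttt{;}\: v \ttt{ := } 1$ (appending the assignment, which avoids any freshness requirement on $v$), whereas you prepend $v \ttt{ := } 1$ and rely on $v$ being fresh; both achieve the same invariant at terminal states and the argument is otherwise identical.
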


\begin{proof}
Consider the following function $f$ which takes a probabilistic program $Q$ as its input and returns the triple $(P,\, v,\, 1)$, where $P$ is the following probabilistic program:
\begin{alltt}\(v\) := 0; \(Q\); \(v\) := 1 \end{alltt}
\textit{\underline{Total and Partial Correctness:}} The triple $(P,\, v,\, 1)$ is obviously computable.
On executing $P$, the variable $v$ is set to one only in those runs in which the program $Q$ terminates. So the expected value of $v$ converges to one, if and only if the probability of $Q$ terminating converges to one.
So if $Q \in \AST$, then and only then $(P,\, v,\, 1) \in \EXP$.
Thus $f\colon \AST \leqm \EXP$~.
\end{proof}

By Theorem \ref{EisPicomp}, $\EXP$ is $\Pi_2^0$--complete, so it follows directly from Lemma \ref{ASTmEXP} that:
\begin{mycor}
\label{ASTinPi}
$\AST \in \Pi_2^0$~.
\end{mycor}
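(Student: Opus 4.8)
The plan is to read off $\AST \in \Pi_2^0$ directly from the two results just established: Lemma~\ref{ASTmEXP}, giving $\AST \leqm \EXP$, together with Theorem~\ref{EisPicomp} (or already Lemma~\ref{EinPi2}), giving $\EXP \in \Pi_2^0$. The only extra ingredient is the standard closure fact that $\Pi_2^0$ is downward closed under many--one reducibility: if $f\colon \Problem A \leqm \Problem B$ with $\Problem B = \{\vec x \mid \forall y_1\, \exists y_2\colon (\vec x,\,y_1,\,y_2) \in \Problem R\}$ for a decidable relation $\Problem R$, then $\Problem A = \{\vec x \mid \forall y_1\, \exists y_2\colon (f(\vec x),\,y_1,\,y_2) \in \Problem R\}$, and the relation $(\vec x,\,y_1,\,y_2) \mapsto \big((f(\vec x),\,y_1,\,y_2) \in \Problem R\big)$ is again decidable since $f$ is computable and $\Problem R$ decidable; hence $\Problem A \in \Pi_2^0$. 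Instantiating $\Problem A := \AST$ and $\Problem B := \EXP$ closes the argument. Note that only membership $\EXP \in \Pi_2^0$ is needed here, not the full $\Pi_2^0$--completeness of $\EXP$.

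As a cross--check, one can also argue directly from Definition~\ref{exp}: since $\Pr_P({\downarrow})$ is a sum of non--negative rationals and is bounded above by $1$ (the terms are the probabilities of pairwise distinct terminating runs), we have $P \in \AST$ iff $\Pr_P({\downarrow}) = 1$ iff $\forall \delta > 0\ \exists y_1\ \exists y_2\colon \sum_{i=0}^{y_1}\sum_{j=0}^{y_2} \alpha\big(\Tps(\langle P,\,\eta_0,\,1,\,\varepsilon\rangle,\,j,\,h(i))\big) > 1 - \delta$. The inner statement is decidable (a finite sum of computable rationals compared to a rational), and the quantifier prefix is $\forall\exists$ after contracting the two existentials, so this is a $\Pi_2^0$ definition. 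The point to notice is that, in contrast to the characterization of $\EXP$ in Lemma~\ref{EinPi2}, no lower/upper--bound split is required: the inequality $\Pr_P({\downarrow}) \leq 1$ is automatic, so only the ``$\geq 1$'' direction must be expressed, and it already has $\Pi_2^0$ shape.

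I expect no genuine obstacle: all the real work of the section is in Lemma~\ref{ASTmEXP} and Theorem~\ref{EisPicomp}, and this corollary is a one--line consequence. The only mild point of care is to cite a target set for the reduction that is already known to lie in $\Pi_2^0$ (namely $\EXP$, via Lemma~\ref{EinPi2}), so that the closure property applies cleanly; in the direct alternative, the only thing worth stating explicitly is the boundedness $\Pr_P({\downarrow}) \leq 1$, which is exactly what lets a single $\forall\exists$ block suffice.
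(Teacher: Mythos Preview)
Your primary argument is correct and matches the paper's exactly: the corollary is stated as an immediate consequence of Lemma~\ref{ASTmEXP} together with the $\Pi_2^0$--membership of $\EXP$ (the paper cites Theorem~\ref{EisPicomp}, though as you observe, Lemma~\ref{EinPi2} already suffices). Your additional direct $\forall\exists$ characterization via $\Pr_P({\downarrow})$ is a valid alternative that the paper does not give, but it is not needed here.
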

Next we will establish the $\Pi_2^0$--hardness and thereby the $\Pi_2^0$--completeness of $\AST$ by many--one--reduction from the universal halting problem:
\begin{mythm}
\label{ASTisPicomp}
$\AST$ is $\Pi_2^0$--complete.
\end{mythm}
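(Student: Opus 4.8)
The plan is to establish $\Pi_2^0$--hardness of $\AST$ by a many--one reduction from $\UHP$, which is $\Pi_2^0$--complete by Theorem~\ref{UHPcomplete}. Since Corollary~\ref{ASTinPi} already gives $\AST \in \Pi_2^0$, it suffices to exhibit a computable function $f\colon \UHP \leqm \AST$.

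The core idea is to reuse the separation-of-randomization-and-computation schema from the proof of Lemma~\ref{RisSigmacomp}. Given an ordinary program $Q \in \sonprogs$, the function $f$ produces the probabilistic program $P$ that first uses two independent geometric-distribution loops (each built from a $[0.5]$ coin flip, exactly as in Lemma~\ref{RisSigmacomp}) to generate a natural number $i$ and a natural number $s$, and then runs a \emph{finite} simulation: it computes $\Tps\big(\langle Q,\, g_Q(i),\, 1,\, \varepsilon\rangle,\, s,\, \varepsilon\big)$ and enters a (deliberately) nonterminating loop, say \texttt{WHILE (1 = 1)\{v := 0\}}, precisely when this simulation has \emph{not} yet reached a terminal state $\langle{\downarrow},\dots\rangle$; otherwise $P$ terminates. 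Crucially, each of the two geometric loops terminates with probability one (the product of the $[0.5]$ coins contributes a convergent $\sum_i 2^{-i} = 1$ mass), the bounded simulation of $s$ steps of $Q$ always halts, and the only source of divergence is the guard deciding whether $Q$ has halted on $g_Q(i)$ within $s$ steps. Hence for a fixed input $g_Q(i)$: if $Q$ halts on it, then for all sufficiently large $s$ the simulation succeeds and $P$ terminates on that branch; if $Q$ does not halt on $g_Q(i)$, then for \emph{every} $s$ the guard triggers the divergent loop, so the entire $i$-branch (of probability $2^{-i-1}$ summed over $s$, i.e.\ total mass $2^{-i}$ on that value of $i$) diverges. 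Therefore $\Pr_P({\downarrow}) = \sum_{i : Q \text{ halts on } g_Q(i)} 2^{-i}$, which equals $1$ if and only if $Q$ halts on $g_Q(i)$ for every $i$; since $g_Q$ is a bijection onto the valuations relevant to $Q$, this is exactly $Q \in \UHP$. So $P \in \AST \Longleftrightarrow Q \in \UHP$.

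As in Lemma~\ref{RisSigmacomp}, total correctness is the routine observation that $\Tps$, $g_Q$, and the guard test are all computable, so the source code of $P$ — and hence the pair $(P,\dots)$, here just $P$ itself since $\AST \subseteq \soprogs$ — is effectively obtained from $Q$; thus $f$ is a computable function and $f\colon \UHP \leqm \AST$. Combining with Theorem~\ref{UHPcomplete}: for any $\Problem A \in \Pi_2^0$ we have $\Problem A \leqm \UHP \leqm \AST$, so by transitivity $\Problem A \leqm \AST$, giving $\Pi_2^0$--hardness; together with Corollary~\ref{ASTinPi} this yields $\Pi_2^0$--completeness.

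The main obstacle, and the point that needs the most care, is ensuring that the \emph{only} divergence in $P$ comes from the intended source: the two geometric loops must be genuinely almost--surely terminating (standard, but it must be stated), the bounded $s$-step simulation of $Q$ must be a terminating computation even when $Q$ itself loops forever (this is exactly what $\Tps$ buys us — it returns $\top$ after $s$ steps rather than running $Q$ to completion), and the ``does $Q$ halt within $s$ steps on $g_Q(i)$'' test must be a decidable guard so that the explicit divergent loop is the sole nonterminating construct. Once this bookkeeping is pinned down, the probability-mass computation $\Pr_P({\downarrow}) = \sum_{i:\,Q\text{ halts on }g_Q(i)} 2^{-i}$ and its equivalence with $Q\in\UHP$ follow exactly as in the argument of Lemma~\ref{RisSigmacomp}.
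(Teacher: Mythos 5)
There is a genuine gap: your reduction does not satisfy $Q \in \UHP \Longleftrightarrow f(Q) \in \AST$. The culprit is the bounded $s$--step simulation combined with the explicit divergent loop. Fix an $i$ on which $Q$ halts, say after exactly $s_i \geq 1$ steps. Your program still enters the divergent loop on every branch in which the second geometric loop produced some $s < s_i$, and these branches carry positive probability (the branch $(i,s)=(0,0)$ alone has probability $\tfrac14$). Hence
\begin{align*}
\Pr_P({\downarrow}) ~=~ \sum_{i\,:\,Q\textnormal{ halts on }g_Q(i)} 2^{-(i+1)}\cdot 2^{-s_i} ~<~ 1
\end{align*}
whenever some $s_i>0$, so $P \notin \AST$ even when $Q \in \UHP$; the map sends essentially every $Q$ to ``no'' and is not a many--one reduction. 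Your observation that ``for all sufficiently large $s$ the simulation succeeds'' is true but does not give the claimed identity $\Pr_P({\downarrow}) = \sum_{i:Q\textnormal{ halts}} 2^{-i}$: in Lemma~\ref{RisSigmacomp} the prematurely truncated branches contribute $0$ to an \emph{expectation} that is then compensated by the rescaling factor $2^{s+1}$ on the one successful branch, but probability mass cannot be rescaled, so here the truncated branches contribute irrecoverable nontermination mass. (There is also a minor normalization slip: $\sum_{i\in\bbbn} 2^{-i}=2$, so your intended formula could not be a probability in any case.)

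The repair is exactly the paper's construction: drop the second geometric loop and the bounded simulation entirely, generate only $i$, and then let $P$ run an \emph{unbounded} simulation $TQ$ of $Q$ on input $g_Q(i)$, so that the $i$--branch terminates precisely when $Q$ halts on $g_Q(i)$. Then $\Pr_P({\downarrow}) = \sum_{i:Q\textnormal{ halts on }g_Q(i)} 2^{-(i+1)}$, which equals $1$ iff $Q\in\UHP$. The rest of your scaffolding --- Corollary~\ref{ASTinPi} for membership, Theorem~\ref{UHPcomplete} plus transitivity of $\leqm$ for hardness, and the computability of $g_Q$ and of a universal simulator --- is fine and matches the paper.
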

\begin{proof}
By Corollary \ref{ASTinPi}, $\AST \in \Pi_2^0$, so it remains to show that $\AST$ is $\Pi_2^0$--hard.
For that we many--one reduce the $\Pi_2^0$--complete universal halting problem to $\AST$ using the following function $f\colon \UHP \leqm \AST$: $f$ takes an ordinary program $Q$ as its input and returns the following probabilistic program $P$:
\begin{alltt}
\(i\) := 0; \{continue := 0\} [0.5] \{continue := 1\};
while (continue \(\neq\) 0)\{
    \(i\) := \(i\) + 1;
    \{continue := 0\} [0.5] \{continue := 1\}
\};
\(TQ\)
\end{alltt}
where $TQ$ is an ordinary program that simulates the program $Q$ on input $g_Q(i)$, and $g_Q\colon \bbbn \rightarrow \mathbb V$ is some computable bijection, such that $\forall v \in \sovars\colon \big[g_Q(i)\big](v) \neq 0$ implies that $v$ occurs in $Q$.

\textit{\underline{Partial Correctness:}}
The while--loop in $P$ establishes a geometric distribution with parameter $0.5$ on $i$ and hence a geometric distribution on all possible inputs for $Q$.
After the while--loop, the program $Q$ is simulated on the input generated probabilistically in the while--loop.
Obviously then the entire program $P$ terminates with probability one, i.e., terminates almost--surely, if and only if the simulation of $Q$ terminates on every input.
Thus $Q \in \UHP$ if and only if $P \in \AST$.

\textit{\underline{Total Correctness:}} As mentioned in the proof of Lemma \ref{RisSigmacomp}, the program code for $g_Q$ is computable.
Also the program code for a universal program capable of simulating any program $Q$ on a given input is computable \cite{kleeneNF}.
So in total, the program code for $P$ is computable.

By Theorem \ref{UHPcomplete}, $\UHP$ is $\Pi_2^0$--complete, so for any $\Problem{A} \in \Pi_2^0$ it holds that $\Problem A \leqm \UHP$.
Since we have just proven that $\UHP \leqm \AST$, it follows that $\Problem A \leqm \UHP \leqm \AST$, and by transitivity $\Problem A \leqm \AST$. 
\end{proof}

\section{Conclusion}
Our results show that one can effectively enumerate all rationals that are strictly less than the expected outcome for a program variable $v$ after executing a probabilistic program $P$, i.e., arbitrarily close approximations from below are computable.
Obtaining such approximations from above is harder:
These would be recursively enumerable only if there would be access to an oracle for the (non--universal) halting problem \cite{kleeneNF,odifreddi1}.
This approximation problem is as hard to solve as deciding whether e.g., an ordinary program halts on finitely many inputs \cite{odifreddi2}.

Deciding almost--sure termination is even harder and is as hard as computing exact expected outcomes.
Namely, such outcomes are not recursively enumerable even if there would be access to an oracle for the halting problem \cite{odifreddi2}.
Other natural examples that are equally hard are the universal halting problem and the problem of deciding whether an ordinary program halts on infinitely many inputs \cite{odifreddi2}.

The established hardness results give insights into the specific difficulties of dealing with the studied decision problems.
In particular further research could be directed towards identifying subsets of probabilistic programs for which the upper bounds of the expected outcome are given by a $\Sigma_2^0$--set $\Problem A = \{x ~|~ \exists y_1 \forall y_2\colon (x,\, y_1,\, y_2) \in R\}$ such that the set $\Problem A' = \{(x, y_1) ~|~ \forall y_2\colon (x,\, y_1,\, y_2) \in R\}$ is decidable.
In this case, the set $\Problem A$ of upper bounds would be recursively enumerable and thus the \emph{exact} expected outcome can be approximated arbitrarily close from below \emph{and} from above.
Obtaining and deciding $\Problem A'$ would basically amount to transforming a given probabilistic program into an ordinary program for which then a non--termination proof has to be found which in certain cases can be automated.

Aside from the above considerations the structure of the probabilistic programs we use in our proofs hints towards the possible existence of a normal form for probabilistic programs in which the randomization and the actual computation are separated. 
Further investigation of this issue is planned.
Further research could also deal with hardness of the considered problems in presence of non--determinism.



\bibliographystyle{plain}
\bibliography{literature}

\end{document}